\definecolor{light}{rgb}{0,1,1}
\DeclareMathOperator{\coker}{coker}
\DeclareMathOperator{\sdet}{sdet}
\DeclareMathOperator{\vt}{Vect}
\DeclareMathOperator{\chara}{character}
\DeclareMathOperator{\IM}{Im}
\newcommand{\Q}{\mathcal{Q}}
\renewcommand{\L}{\mathcal{L}}
\newcommand{\R}{\mathcal{R}}
\newcommand{\e}[1]{\mathrm{e}^{#1}}
\title{Index of the Transversally Elliptic Complex in Pestunization}
\author{Roman Mauch}
\author{and Lorenzo Ruggeri}
\affiliation{Department of Physics and Astronomy, Uppsala Universitet\\Lägerhyddsvägen 1, 752 37 Uppsala, Sweden}
\emailAdd{roman.mauch@physics.uu.se}
\emailAdd{lorenzo.ruggeri@physics.uu.se}
\preprint{UUITP-64/21}
\abstract{ In this note we present a formula for the equivariant index of the cohomological complex obtained from localization of $\mathcal{N}=2$ SYM on simply-connected compact four-manifolds with a $T^2$-action. Knowledge of said index is essential to compute the perturbative part of the partition function for the theory. In the topologically twisted case, the complex is elliptic and its index can be computed in a standard way using the Atiyah-Bott localization formula. Recently, a framework for more general types of twisting, so-called cohomological twisting, was introduced for which the complex turns out to be only transversally elliptic. While the index of such a complex has been computed for some cases where the manifold can be lifted to a Sasakian $S^1$-fibration in five dimensions, a general four-dimensional treatment was still lacking. We provide a formal, purely four-dimensional treatment of the cohomological complex, showing that the Laplacian part can be globally split off while the remaining part can be trivialized uniquely in the group-direction. This ultimately produces a simple formula for the index applicable for any compact simply-connected four-manifold. Finally, the index formula is applied to examples on $S^4$, $\mathbb{CP}^2$ and $\mathbb{F}^1$. For the latter, we use the result to compute the perturbative partition function.}
\begin{document}

\maketitle
\flushbottom

\section{Introduction}
    The understanding of supersymmetric quantum field theories on compact manifolds has benefited widely from localization techniques, starting with the works in \cite{Witten:1988ze,Nekrasov:2002qd,arXiv:0712.2824}. Following these results, great progress has been made towards extending the localization procedure to different dimensions, background geometries and number of supercharges. A comprehensive review is provided in \cite{Pestun:2016zxk}. 
    
    In this note, we consider $\mathcal{N}=2$ SYM theories on compact, simply-connected four-manifolds $X$ which are equipped with a $T^2$-action, generated by a Killing vector field with isolated fixed points. Many results have been obtained for this setup in the literature \cite{arXiv:0712.2824,Bawane:2014uka,Bershtein:2015xfa,Bershtein:2016mxz,Hama:2012bg,Lundin:2021zeb}. Relying on these, it has been conjectured in \cite{arXiv:1812.06473} that an arbitrary distribution of Nekrasov partition functions for either instantons or anti-instantons (corresponding to anti-self-dual (ASD) or self-dual (SD) connections) at each fixed point gives rise to the partition function of a valid supersymmetric theory. This was extended to theories involving matter in \cite{Festuccia:2020yff}. Taking $S^4$ as an example, by distributing ASD at both poles we obtain equivariant Donaldson-Witten theory \cite{Witten:1988ze} counting ASD connections on $S^4$. Placing ASD at one pole and SD at the other pole instead gives Pestun's theory on $S^4$ \cite{arXiv:0712.2824}. Having SD at both poles gives again equivariant Donaldson-Witten theory, now counting SD connections. 
    
    It has been shown in \cite{arXiv:1904.12782} that for cohomological twisting (as opposed to topological twisting which is a special case of the former) on $X$, the localization procedure naturally gives rise to a double-(cochain)complex with the maps provided by supersymmetry and BRST transformations. Moreover, the one-loop determinant resulting from localization can be computed from the equivariant index of the horizontal component of the double-complex. The latter is of a standard form (given in \eqref{eq--complex}) and is elliptic for topological twisting, but this is no longer true in the more general case; rather, the complex is elliptic only transverse to the $T^2$-action. While the equivariant index for the elliptic case can be computed straightforwardly using the Atiyah-Bott formula \cite{Atiyah:1967}, the computation for the transversally elliptic complex turns out to be more subtle. This is due to the fact that the cohomologies of such a complex are not finite-dimensional anymore (which would be the case for an elliptic complex, which is Fredholm for $X$ compact). However, the cohomology at each level can be decomposed into irreducible representations of the $T^2$-action, each of them appearing with finite multiplicity \cite{Atiyah:1974}. Therefore, the index becomes a distribution on $T^2$ rather than an ordinary function (as in the elliptic case).
    
    The index computation for such cases has been performed, for example, in \cite{arXiv:0712.2824,Hama:2012bg} on $S^4$ and more generally in \cite{arXiv:1812.06473,arXiv:1904.12782} for manifolds that can be lifted to a Sasakian $S^1$-fibration in five dimensions with a specific type of ASD/SD distributions (which correspond to different $S^1$-fibrations over $X$). It turns out that the index is still composed of the elliptic contributions around the fixed points, however, they have to be regularized in an suitable way\footnote{One might like to think of the different ways to regularize as the different ways to ``glue'' the fixed point contributions together.}.  
    
    This work provides an extension to the aforementioned index computation to any simply-connected, compact four-manifold and arbitrary distributions of ASD/SD at the fixed points, for the zero-flux sector. This is done essentially by decomposing the symbol of the transversally elliptic complex, denoted $(E^\bullet,D)$ in the following and given in \eqref{eq--complex}, into more accessible parts. We achieve this by using the fact that the index depends on the symbol only up to homotopy and, thus, only the corresponding symbol class is relevant. We first show that, up to an isomorphism, the symbol can be split into two parts, an elliptic one (which is simply the Laplacian) and a transversally elliptic one. Using the homomorphism property of the index, both parts can be computed separately and the Laplacian part does not contribute. For the remaining, transversally elliptic part we make use of a filtration of $X$ with respect to the $T^2$-action, presented in \cite{Atiyah:1974}. This allows to construct a new symbol (homotopic to the original one) by essentially taking the fixed point contributions of the original symbol and gluing them along the $T^2$-action in a compatible way (this is discussed in \autoref{sec--2}) such that the new symbol is only supported at the fixed points. Finally, the index decomposes into contributions from the fixed points whose regularization turns out to be determined uniquely\footnote{Up to some ambiguity arising from gluing which does not affect the index itself.} by the ASD/SD distribution at the fixed points. 
    
    Once the equivariant index of $(E^\bullet,D)$ is determined, it can be used to compute the one-loop determinant arising from localization, which itself constitutes the (exact) perturbative part of the partition function for the $\mathcal{N}=2$ SYM theory under consideration. We expect our procedure to extend also to the case of non-zero flux  and to theories including matter and shall comment on this later on.
    
    The article is organized as follows. In \autoref{sec--1} we briefly recall how localization gives rise to a complex and how one can determine the one-loop determinant in terms of the index of this complex. Subsequently, we show that the symbol of the complex splits globally into a Laplacian part (which is elliptic) and a transversally elliptic part (with respect to the $T^2$-action). In \autoref{sec--2} we ``break up'' the symbol into simpler pieces, namely the (elliptic) contributions at the fixed points and give a prescription on how to glue these together in a way that the new symbol is homotopic to (hence gives the same index than) the original one. Finally, in \autoref{sec--3} we compute the index of the symbol and provide an explicit formula \eqref{eq--eqindex} which can be used to determine the one-loop determinant of the theory at hand. We exemplify our procedure in \autoref{sec--4} for different ASD/SD distributions on $S^4,\mathbb{CP}^2$ and $\mathbb{F}^1$. For the cases known from \cite{arXiv:1812.06473} we check that they are in agreement with our results, but also provide some new examples which cannot be computed from \cite{arXiv:1812.06473}. We conclude by applying the index computation for $\mathbb{F}^1$ to find the perturbative part of the partition function in the zero-flux sector.

    \subsection{Summary of Procedure}
    
        For the physically-oriented reader, we provide a quick recipe on how to compute the equivariant index of \eqref{eq--complex} for a given manifold $X$ and distribution of SD/ASD complexes\footnote{Note that for an ASD connection $A$ we have $F^+_A=0$ and thus the SD complex is the relevant one. Hence, when we talk about a SD complex over some fixed point this is associated with an instanton (ASD connection) while an ASD complex is associated with an anti-instanton (SD connection). In the remainder, when we say SD/ASD we always refer to the complex.} over the set of torus fixed points $Y$. The full index is obtained as a product of the character of the adjoint representation of the gauge group, $\chi_\mathrm{Ad}$ with the index of the isometry-part \eqref{eq--complexH} of the complex, denoted $\iind\eth$. The latter can be determined through the following steps:
        \begin{itemize}
            \item[(i)] Introduce an atlas on $X$ such that each fixed point $l$ is contained in one patch $U_l$, with local complex coordinates $(z^{(l)}_1,z^{(l)}_2)$.
            \item[(ii)] On $U_l$, denote by $\epsilon_1,\epsilon_2$ the coordinates on $\Lie T^2$ and $t_{1}=\exp(\mathrm{i}\epsilon_{1}),\;t_2=\exp(\mathrm{i}\epsilon_{2})$ the corresponding coordinates on $T^2$. For a $T^2$-action on $(z^{(l)}_1,z^{(l)}_2)$ given by
            \begin{equation*}
                 z^{(l)}_1\mapsto t_1^{\alpha^{(l)}_{11}}t_2^{\alpha^{(l)}_{12}}z^{(l)}_1,\qquad z^{(l)}_2\mapsto t_1^{\alpha^{(l)}_{21}}t_2^{\alpha^{(l)}_{22}}z^{(l)}_2,\qquad\alpha^{(l)}_{ij}\in\mathbb{Z},
            \end{equation*}
            read out the infinitesimal weights:
            \begin{equation*}
                \alpha^{(l)}_i=(\alpha^{(l)}_{i1},\alpha^{(l)}_{i2})
                ,\quad i=1,2.
            \end{equation*}
            If an ASD complex is placed at $l$, perform the flip $\alpha^{(l)}_1\mapsto -\alpha^{(l)}_1$. 
            \item[(iii)] Create an array $s=(s_1,\dots,s_n)$, with $n$ the number of fixed points, in which $s_l=+$ for SD at $l$ or $s_l=-$ for ASD.
            \item[(iv)] Finally, insert everything into the following formula for the index: 
            \begin{equation*}\label{eq--eqindex.intro}
                \iind\eth=-\sum_{l\in Y}\left(1+\prod_{i=1}^2t^{-\alpha^{(l)}_i}\right)\prod_{k=1}^2\left(\frac{1}{1-t^{-\alpha^{(l)}_k}}\right)^{s_l},
            \end{equation*}
            with $t^{\alpha^{(l)}_i}:=\prod_{j=1}^2t_j^{\alpha^{(l)}_{ij}}$ and $(\cdot)^{\pm}$ the Laurent expansion at $t=0$, $t=\infty$, respectively (cf. \eqref{eq--laurent.t-1}).
        \end{itemize}
        The full index, $\iind\eth\cdot\chi_\mathrm{Ad}$ (with $\chi_\mathrm{Ad}$ the character of the adjoint representation of $G$) can then be used to determine the perturbative partition function as outlined in \autoref{sec--1} and exemplified in \autoref{sec--4} for $\mathbb{F}^1$. 

\section{Transversally Elliptic Complex from Localization}\label{sec--1}

    This work is concerned with cohomologically twisted $\mathcal{N}=2$ SYM theories on simply-connected compact Riemannian four-manifolds, always denoted by $X$ in the following. It is further assumed that the theory is invariant under a $T^2$-isometry of $X$ generated by a Killing vector field with isolated fixed points. The general theory of this setup has been introduced in \cite{arXiv:1812.06473,arXiv:1904.12782} which provide the basis of this work and some familiarity with these references is essential for a good understanding of it.
    
    In this section, we first recall how an index computation naturally arises in the process of localization in such theories and present the complex of which the index needs to be computed. Subsequently, we prove that this complex is transversally elliptic and give a precise formulation of the index computation.

    \subsection{Cohomological Complex from $\mathcal{N}=2$ Localization}
    
        In the cohomologically twisted setting, after appropriate gauge fixing was performed, the field content schematically is given by even fields $\Phi,\hat{\Phi}$ and odd ones, $\Psi,\hat{\Psi}$. They are related, via a supercharge $\Q$ (which really is a combination of supersymmetry and BRST now) by the following transformations:
        \begin{equation}\label{eq--SUSY}
            \begin{aligned}
                \Q\Phi=\Psi,&\qquad \Q\Psi=\R\Phi,\\
                \Q\hat{\Psi}=\hat{\Phi},&\qquad \Q\hat{\Phi}=\R\hat{\Psi}.
            \end{aligned}
        \end{equation}
        Here, $\R$ is the action of the symmetry groups, including the gauge group $G$ and the isometry group of $X$ containing $H=T^2$ as a subgroup. We denote the symmetry groups collectively by $R$. From \eqref{eq--SUSY} we see that $(\Phi,\Psi)$ and $(\hat{\Psi},\hat{\Phi})$ form multiplets with $\Q^2=\R$, i.e. $\Q$ can be viewed as an equivariant differential. In order to perform localization, the action is deformed by a $\Q$-exact term $\delta S=t\int\Q V$ ($t\in\mathbb{R}$) such that $\Q^2 V=0$ (i.e. $V$ is $\R$-invariant) and the bosonic part of $\Q V$ is positive semi-definite. By the standard argument, the path integral of the deformed action is independent of the value of $t$ and we choose $t\to\infty$. The path integral then localizes to field configurations such that $\Q V=0$. 
        A common choice for $V$ is $\langle \mathcal{F},\Q \mathcal{F}\rangle$, where $\mathcal{F}$ denotes the fermionic field content and $\langle\cdot,\cdot\rangle$ an inner product invariant under the symmetries $R$ with reality conditions for the fields chosen suitably such that $\langle\Q\mathcal{F},\Q\mathcal{F}\rangle$ is positive semi-definite. The one-loop part of the partition function is obtained by expanding $\Q V$ around these configurations to second order, producing a quadratic term
        \begin{equation}\label{eq--V2}
            \Q V^{(2)}=\Q\left(\langle\Psi,\R\Phi\rangle+\langle\hat{\Psi},D\Phi\rangle+\langle\hat{\Psi},\hat{\Phi}\rangle\right),
        \end{equation}
        where $\Phi,\Psi,\hat{\Phi},\hat{\Psi}$ now denote fluctuations around the localization locus. This quadratic term can be integrated to the one-loop determinant (cf. \cite{arXiv:1904.12782} section 2)
        \begin{equation*}
            \frac{\det^{1/2}|_{\coker D}\R}{\det^{1/2}|_{\ker D}\R}=\sdet^{1/2}|_{H^\bullet(D)}\R,
        \end{equation*}
        where the numerator arises from fermionic integration and the denominator from bosonic one. The operator $D$ is the piece in the quadratic action that ``maps $\Phi$ to $\hat{\Psi}$''\footnote{It was shown in \cite{arXiv:1904.12782} that supersymmetry and BRST differentials form a double complex, of which $D$ denotes the horizontal maps.} and $H^\bullet(D)$ is its cohomology. We can now decompose $\ker D$ and $\coker D$ into irreducible representations of the group action of $\R$, labelled by $\alpha$, which appear with finite multiplicity $m_\alpha^\mathrm{ker}$ and $m_\alpha^\mathrm{coker}$. Note that for elliptic $D$, the number of irreducible representations is always finite (remember $X$ is compact, i.e. $D$ is Fredholm) whereas for transversally elliptic $D$, it is, in general,  infinite \cite{Atiyah:1974}. Let $w_\alpha(\epsilon)$ denote the (sum of) weights of representation $\alpha$ depending on the equivariant parameters, collectively denoted by $\epsilon$ (which can be read off from $\mathcal{L}$). Then we can write
        \begin{equation}\label{eq--sdet}
            \sdet^{1/2}|_{H^\bullet(D)}\R=\prod_\alpha w_\alpha(\epsilon)^{(m_\alpha^\mathrm{coker}-m_\alpha^\mathrm{ker})/2}.
        \end{equation} 
        The weights and multiplicities can be extracted from the equivariant index of $D$,
        \begin{equation}\label{eq--iindD}
            \iind D=\sum_\alpha (m_\alpha^\mathrm{ker}-m_\alpha^\mathrm{coker})\mathrm{e}^{w_\alpha(\epsilon)},
        \end{equation}
        where $\mathrm{e}^{w_\alpha(\epsilon)}$ denotes the respective character.
        Therefore, in order to compute the one-loop determinant it is paramount to have the equivariant index of $D$ at our disposal.

        For cohomological twisting, the field strength $F$ of the gauge field localizes to anti-self-dual configurations, $F\in P^+\Omega^2_X$, at some fixed points of the $H$-action and to self-dual ones, $F\in P^-\Omega^2_X$, at the remaining fixed points. Hence, a global description requires the introduction of a generalized projector $P^+_\omega$ that interpolates between $P^+$ and $P^-$ away from the fixed points. The construction of $P^+_\omega$ basically consists of gluing spaces $\Omega^{2+}$ and $\Omega^{2-}$ on the overlap of the respective patches around the fixed points using the isomorphism
        \begin{equation}\label{eq--transition}
            m:\Omega^{2+}\rightarrow\Omega^{2-},\,\beta\mapsto -\beta+\frac{2}{\|v\|^2}\kappa\wedge\iota_v\beta.
        \end{equation}  
        Here, $v$ denotes the Killing vector field that generates the $H$-action and $\kappa=g(v,\cdot)$ is the one-form canonically associated to $v$ via the metric $g$. A detailed derivation of $P^+_\omega$ is provided in \cite{arXiv:1812.06473} and here we only state the result,
        \begin{equation}\label{eq--projector}
            P^+_\omega=\frac{1}{1+\cos^2\omega}\left(1+\cos\omega\star-\sin^2\omega\,\frac{\kappa\wedge\iota_v}{\|v\|^2}\right),
        \end{equation}
        with $\cos\omega=1$ at some fixed points and $\cos\omega=-1$ at the remaining ones (e.g. $\cos\omega=1$ at all fixed points for Donaldson-Witten theory).

        Finally, we need to know the explicit form of $D$ in order to compute the equivariant index. We have $\Phi=(A,\varphi)$ and $\Psi=(\chi,\bar{c},c)$, with gauge field $A$, real scalar $\varphi$, a two-form $\chi$ in the image of $P^+_\omega$ and ghost fields $c,\bar{c}$. 
        From the localization action in \cite{arXiv:1812.06473} we obtain
        \begin{equation}\label{eq--Donbosons}
            V^{(2)}\supset c\wedge\star\mathrm{d}^\dagger(\iota_vF+\mathrm{d}(\cos\omega\,\varphi))+\bar{c}\wedge\star\mathrm{d}^\dagger A+\chi\wedge\star P^+_\omega(F+\iota_v\star\mathrm{d}\varphi),
        \end{equation}
        where we chose Lorenz gauge for the gauge-fixing. 
        Here, $A\in\Omega^1_X(\mathfrak{g}_P),\varphi\in\Omega^0_X(\mathfrak{g}_P)$ are gauge and scalar field fluctuations around the localization locus. $\mathfrak{g}_P=P\times_\mathrm{Ad}\Lie G$ denotes the associated bundle of Lie algebras to the principal bundle $P$ over $X$ via the adjoint action of the gauge group $G$ (i.e. $A,\varphi$ are Lie algebra-valued one-forms and scalar fields as usual). Note that \eqref{eq--Donbosons} is obtained by expanding around the trivial connection for an Abelian gauge group. It is easily extended to the general non-Abelian case, but the index only depends on the (principal) symbol of $D$ to which the additional terms that would appear do not contribute. Note that in this setup, the square of the supercharge is $\R=\mathrm{i}\L_v+G_{a_0}$ with $v$ the Killing vector and $G_{a_0}$ a gauge transformation by the Coulomb parameter $a_0$. 

        The actions \eqref{eq--Donbosons} on the bosonic fields can be recast into a cochain complex
        \begin{equation}\label{eq--complex}
            \begin{tikzcd}[column sep=scriptsize]
                0\ar[r] & (\Omega^1_X\oplus\Omega^0_X)\otimes\Gamma(\mathfrak{g}_P)\ar[r,"\eth\otimes 1"] & (P^+_\omega\Omega^2_X\oplus\Omega^0_X\oplus\Omega^0_X)\otimes\Gamma(\mathfrak{g}_P)\ar[r] & 0
            \end{tikzcd},
        \end{equation}
        where $\Gamma(\mathfrak{g}_P)$ denotes the sheaf of sections on $\mathfrak{g}_P$ and the differential operator $\eth$ is given by
        \begin{equation}\label{eq--pdo}
            \eth=\begin{pmatrix}
                P^+_\omega\mathrm{d} & \phantom{-}P^+_\omega\iota_v\star\mathrm{d}\\
                \mathrm{d}^\dagger\iota_v\mathrm{d} & \phantom{-}\mathrm{d}^\dagger\mathrm{d}\cos\omega\\
                \mathrm{d}^\dagger & 0
            \end{pmatrix}.
        \end{equation}
        We denote this complex by $(E^\bullet,D)$ in the following (with $D=\eth\otimes 1$). Note that \eqref{eq--complex} is $H$-invariant\footnote{For $h\in H$, denote its action on $X$ as $L_h:X\rightarrow X,x\mapsto g\cdot x$. Then $TX$ is an $H$-space via the induced map $(x,\xi)\mapsto (g\cdot x,L_{g\ast}\xi)$ and from this we can define an $H$-action on $T^\ast X$ and $\Lambda^nT^\ast X$ in the canonical way. For the complex to be invariant it suffices to show that $L_h^\ast$ commutes with $d$, $\star$ and $\iota_v$. But $d$ always commutes with the induced map and it is easy to see that, by virtue of $v$ being the induced vector field for the $H$-action whose elements are isometries, also $\star$ and $\iota_v$ commute with $L_h^\ast$.} and also $G$-invariant. We would now like to compute the index of this complex, $\iind D$ which we embark on in the next section.
        
        At the end of this section we have to issue a warning. It is suggested above that the equivariant index of the complex \eqref{eq--complex} already suffices to compute the superdeterminant. However, the quadratic piece $\Q V^{(2)}$ might have some
        zero-modes that have to be taken care of before performing the path integral. These might arise from the ghost fields as well as from the scalars when expanding around a reducible connection. The ghost zero-modes can be removed in a systematic fashion by adding ghosts of ghosts and pairs $(a_i,\Q a_i)$ of constant fields (they can be thought of as an extension of the BRST content to the non-minimal sector); see \cite{arXiv:0712.2824} for a detailed exposition on $S^4$.  Consequently, the complex \eqref{eq--complex} has to be extended (trivially) by the constant fields $a_i$ which give an additional contribution to $\iind D$ (which is just some integer, corresponding to the number of pairs introduced, times the character of the adjoint; see the next section for details). It is only after we have taken care of these zero-modes that we can translate the index into a well-defined superdeterminant.

    \subsection{Transversally Elliptic Complex and the Equivariant Index}

        In contrast to the elliptic complex obtained for the case of the topologically twisted theory \cite{Witten:1988ze}, the complex \eqref{eq--complex} turns out to only be transversally elliptic, which we show in due course. Note that, at the fixed points of $v$, the complex indeed splits into the (folded) SD/ASD complex and the scalar Laplacian as
        \begin{equation}\label{eq--ASD+laplacian}
            \Bigl(
                \begin{tikzcd}[column sep=small]
                    0\ar[r] & \Omega^1_X(\mathfrak{g}_P)\ar[r,"\mathrm{d}^\pm\oplus\mathrm{d}^\dagger"] & \Omega_X^{2\pm}(\mathfrak{g}_P)\oplus\Omega^0_X(\mathfrak{g}_P)\ar[r] & 0
                \end{tikzcd}
            \Bigr)\oplus
            \Bigl(
                \begin{tikzcd}[column sep=small]
                    0\ar[r] & \Omega^0_X(\mathfrak{g}_P)\ar[r,"\Delta"] & \Omega^0_X(\mathfrak{g}_P)\ar[r] & 0        
                \end{tikzcd}
            \Bigr).
        \end{equation}
        which are both elliptic. For ease of notation, let us henceforth denote the SD, ASD complex by $(\Omega^\bullet,\mathrm{d}^+)$, respectively $(\Omega^\bullet,\mathrm{d}^-)$.
        
        In order to evaluate the one-loop superdeterminant we need to compute the equivariant index of \eqref{eq--complex}. 
        This requires knowledge about the symbol $\sigma(\eth\otimes1)$. For the remainder of these notes let us take the view on the symbol as being (a representative of) an element in the equivariant $K$-group $K_{H\times G}$ over the cotangent bundle $\pi:TX\rightarrow X$ of $X$ (we henceforth identify tangent and cotangent bundles via the metric on $X$). A brief review of this viewpoint and some $K$-theory essentials is provided in \autoref{app--ktheory}. 
        This will be useful since in the transversally elliptic case too there exists an index theorem stating that \cite{Atiyah:1974}
        \begin{equation*}
            \iind(\eth\otimes1)=\ind_{H\times G}[\sigma(\eth\otimes 1)],
        \end{equation*}
        for an $R(H\times G)$-module homomorphism\footnote{Note that the $R(H\times G)$-module structure on $K_{h\times G}(TX)$ is induced by the projection of $X$ onto a point and the fact that, for a point, $K_{H\times G}(\{\mathrm{pt}\})\simeq R(H\times G)$. On $\mathcal{D}^\prime(H\times G)$, a representation in $R(H\times G)$ acts by multiplication with its character.}
        \begin{equation}\label{eq--index}
            \ind_{H\times G}:K_{H\times G}(TX)\rightarrow\mathcal{D}^\prime(H\times G),
        \end{equation}
        where $R(H\times G)$ denotes the representation ring of $H\times G$\footnote{The representation ring $R(H\times G)$ of $H\times G$ is obtained by applying the Grothendieck construction to the semigroup of finite-dimensional complex representation spaces of $H\times G$. Multiplication is given by the tensor product.} and $\mathcal{D}^\prime(H\times G)$ denotes the space of distributions over the space of test functions $\mathcal{D}(H\times G)$ on $H\times G$. The map $\ind$ is called the topological index (as opposed to the analytical one). 
        
        Before we inspect $\sigma(\eth\otimes1)$, let us make the following useful observation. It was briefly discussed in the last section that the equivariant index can still be defined as
        \begin{equation*}
            \iind(\eth\otimes1)=\chara\ker (\eth\otimes1)-\chara\coker (\eth\otimes1).
        \end{equation*}
        But $\eth\otimes 1$ acts trivially on $\Gamma(\mathfrak{g}_P)$ in \eqref{eq--complex} and so does $H$ (remember that we expand around the trivial connection), whereas $G$ acts in the adjoint representation. Therefore, the index can be simplified to
        \begin{equation}\label{eq--fullindex}
            \iind(\eth\otimes1)=\iind\eth\cdot\chi_\mathrm{Ad}
        \end{equation}
        with $\chi_\mathrm{Ad}$ the character of the adjoint representation of $G$. Hence, it suffices to compute the index of
        \begin{equation}\label{eq--complexH}
           \begin{tikzcd}[column sep=scriptsize]
                0\ar[r] & \Omega^1_X\oplus\Omega^0_X\ar[r,"\eth"] & P^+_\omega\Omega^2_X\oplus\Omega^0_X\oplus\Omega^0_X\ar[r] & 0
            \end{tikzcd}
        \end{equation}
        (note that $G$ acts trivially on this complex, hence we expect $\iind\eth\in\mathcal{D}^\prime(H)$).
        
        Let us therefore inspect the symbol $\sigma(\eth)$ of \eqref{eq--complexH} more closely. Using the shorthand notation $\Lambda^i_X=\Lambda^iT^\ast X$, we have the following complex:
        \begin{equation}\label{eq--symbol}
            \begin{tikzcd}[column sep=scriptsize]
                0\ar[r] & \pi^\ast\left(\Lambda^1_X\oplus\Lambda^0_X\right)\ar[r,"\sigma(\eth)"] & \pi^\ast\left(P^+_\omega\Lambda^2_X\oplus\Lambda^0_X\oplus\Lambda^0_X\right)\ar[r] & 0,
            \end{tikzcd}
        \end{equation}
        such that the restriction to the fiber over $(x,\xi)\in X\times T_xX$ gives a linear map
        \begin{equation}\label{eq--symbolcf}
                \sigma(\eth)(x,\xi):(a,\varphi)\longmapsto\left(P^+_\omega[\xi\wedge a+\star(\xi\wedge\kappa)\varphi],\|\xi\|^2\iota_va-\xi_v\langle\xi,a\rangle-\|\xi\|^2c_\omega\varphi,-\langle\xi,a\rangle\right)
        \end{equation}
        with $a\in\Lambda^1_X|_x$ and $\varphi\in\Lambda^0_X|_x$.
        Here, we have introduced the notation $c_\omega:=\cos\omega$ (we also use $s_\omega:=\sin\omega$ later on) and we abuse notation by denoting $\xi$ as an element in $T_xX$ and $T^\ast_xX$ interchangeably. We also have defined $\xi_v:=\langle\xi,v\rangle$, where $\langle\cdot,\cdot\rangle$ denotes the metric inner product in the appropriate sense. 
        For a proper choice of local coordinates and basis sections, \eqref{eq--symbolcf} coincides with the matrix presentation (149) in \cite{arXiv:1812.06473} obtained from a five-dimensional setting. 
        
        It turns out that the Laplacian part of the symbol can be decoupled globally rather than just at the fixed points. The new symbol, while different to the one of \eqref{eq--ASD+laplacian}, has the same support and is ultimately used for the index computation. The decoupling is achieved by applying two maps of complexes, $f=(f^0,f^1)$ and $g=(g^0,g^1)$, to \eqref{eq--symbol}. The maps
        \begin{align*}
            f^0:\pi^\ast(\Lambda^1_X\oplus\Lambda^0_X)&\longrightarrow\pi^\ast(\Lambda^1_X\oplus\Lambda^0_X)\\
            f^1:\pi^\ast(P^+_\omega\Lambda^2_X\oplus\Lambda^0_X\oplus\Lambda^0_X)&\longrightarrow\pi^\ast(P^+_\omega\Lambda^2_X\oplus\Lambda^0_X\oplus\Lambda^0_X)
        \end{align*}
        are bundle morphisms, such that on each fiber we have the linear maps
        \begin{equation}\label{eq--chainmap}
            \begin{aligned}
                f^0:(a,\varphi)&\longmapsto\left(\left(1-P_v\right)a+c_\omega P_v a+\kappa\varphi,-c_\omega\varphi+\iota_va\right),\\
                f^1:(\chi,\tilde{c},c)&\longmapsto(\chi,\tilde{c}-\xi_vc,c),
            \end{aligned}
        \end{equation}
        where we have introduced a projection in $v$-direction, 
        \begin{equation*}
            P_v:=\frac{\kappa\wedge\iota_v}{\|v\|^2}.
        \end{equation*}
        Note that $f^0,f^1$ are isomorphisms. The new symbol map, denoted by $f(\sigma(\eth))$, is given by $f_2\circ\sigma(\eth)\circ f_1^{-1}$. The second set of maps,
        \begin{align*}
            g^0:\pi^\ast(\Lambda^1_X\oplus\Lambda^0_X)&\longrightarrow\pi^\ast(\Lambda^1_X\oplus\Lambda^0_X)\\
            g^1:\pi^\ast(P^+_\omega\Lambda^2_X\oplus\Lambda^0_X\oplus\Lambda^0_X)&\longrightarrow\pi^\ast(P^+_\omega\Lambda^2_X\oplus\Lambda^0_X\oplus\Lambda^0_X)
        \end{align*}
        are the bundle morphisms that, on the fiber over $(x,\xi)\in X\times T_xX$, are linear maps
        \begin{equation}\label{eq--chainmap2}
            \begin{aligned}
                g^0:(a,\varphi)&\longmapsto(a,\varphi),\\
                g^1:(\chi,\tilde{c},c)&\longmapsto\left(\chi,\tilde{c},c+\frac{\xi_v}{\|\xi\|^2}\tilde{c}\right).
            \end{aligned}
        \end{equation}
        Note that $g^1$ is well-defined when applied to $f(\sigma(\eth))$. Also $g^0,g^1$ are isomorphisms and we thus find that $\sigma(\eth)$ is isomorphic to the new symbol complex
        \begin{equation}\label{eq--Symbol}
            \begin{tikzcd}[column sep=scriptsize]
                0\ar[r] & \pi^\ast\left(\Lambda^1_X\oplus\Lambda^0_X\right)\ar[r,"\sigma"] & \pi^\ast\left(P^+_\omega\Lambda^2_X\oplus\Lambda^0_X\oplus\Lambda^0_X\right)\ar[r] & 0,
            \end{tikzcd}
        \end{equation}
        with $\sigma=g\circ f(\sigma(\eth))$,
        acting on the fiber over $(x,\xi)$ as
        \begin{equation}\label{eq--symbolaction}
            \sigma(x,\xi):(a,\varphi)\longmapsto\left(P^+_\omega\left[\xi\wedge\left(1-P_v\right)a+\star\left(\xi\wedge P_v a\right)\right],\|\xi\|^2\varphi,-\langle\xi,\left(1-P_v\right)a+c_\omega P_va\rangle\right).
        \end{equation}
        Note that, on the level of fibers and upon an explicit choice of bases, the symbol is just a matrix and $f,g$ simply act as basis transformations on that matrix. Hence, we can view \eqref{eq--symbolaction} as a ``basis transformation'' of \eqref{eq--symbolcf}.
        
        It is evident from \eqref{eq--symbolaction} that $\sigma=\sigma_\omega+\sigma(\Delta)$ globally, where $\sigma(\Delta):\pi^\ast\Lambda^0_X\rightarrow\pi^\ast\Lambda^0_X$ is the symbol of the Laplacian $\Delta$ and $\sigma_\omega$ maps between the residual summands in \eqref{eq--Symbol} with the action on the fibers specified by \eqref{eq--symbolaction}. By virtue of the isomorphism between $\sigma(\eth)$ and $\sigma$, we can henceforth choose to work with $\sigma$. In particular, at the fixed points of $v$, $\sigma_\omega$ is isomorphic to the symbol $\sigma(\mathrm{d}^\pm)$ of the (folded) SD/ASD complex. 
        
        \begin{prop}
            The symbol complex \eqref{eq--Symbol} is a transversally elliptic complex.
        \end{prop}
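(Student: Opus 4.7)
The plan is to verify the definition of transversal ellipticity directly: show that the bundle map $\sigma$ in \eqref{eq--Symbol} is a fibrewise isomorphism on the set of nonzero covectors annihilating the orbit direction,
$$
\{(x,\xi)\in TX : \xi \neq 0,\ \xi_v = 0\}.
$$
Since source $\pi^*(\Lambda^1_X \oplus \Lambda^0_X)$ and target $\pi^*(P^+_\omega\Lambda^2_X \oplus \Lambda^0_X \oplus \Lambda^0_X)$ both have rank $5$, injectivity on each fibre already implies bijectivity there.

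First I would exploit the global decomposition $\sigma = \sigma_\omega \oplus \sigma(\Delta)$ already established in the text. The Laplacian piece $\sigma(\Delta)(\xi) = \|\xi\|^2 \cdot \mathrm{id}$ on $\pi^*\Lambda^0_X$ is an isomorphism for any $\xi \neq 0$, so the task reduces to showing that
$$
\sigma_\omega(x,\xi) : \Lambda^1_X|_x \longrightarrow \bigl(P^+_\omega\Lambda^2_X \oplus \Lambda^0_X\bigr)|_x
$$
is injective whenever $\xi \neq 0$ and $\xi_v = 0$. I would split this into two cases. At a fixed point of $v$, the conormal condition is vacuous and $P_v, \kappa$ both vanish, so $\sigma_\omega$ degenerates to the symbol of the folded SD/ASD complex $\mathrm{d}^\pm \oplus \mathrm{d}^\dagger$ appearing in \eqref{eq--ASD+laplacian}, whose ellipticity is standard.

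Away from fixed points, $v(x) \neq 0$; choose an oriented orthonormal frame $(e_0,e_1,e_2,e_3)$ at $x$ with $e_0 = v/\|v\|$, so that $\kappa = \|v\| e_0^*$ and the transverse condition reads $\xi_0 = 0$. A rotation of $(e_1,e_2,e_3)$ lets us take $\xi = \xi_1 e_1^*$ with $\xi_1 > 0$. Writing $a = \sum_\mu a_\mu e_\mu^*$ and imposing $\sigma_\omega(x,\xi)(a)=0$, the scalar component yields $\xi_1 a_1 = 0$: the $c_\omega P_v$ term drops out of the pairing because $\xi$ is orthogonal to $\kappa$. Thus $a_1 = 0$. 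The two-form argument of $P^+_\omega$ simplifies to $\xi_1 \omega_0$ with
$$
\omega_0 = a_2\, e_1^* \wedge e_2^* + a_3\, e_1^* \wedge e_3^* - a_0\, e_2^* \wedge e_3^*,
$$
which carries no $e_0^*$ factor and hence satisfies $P_v\omega_0=0$ on two-forms. Consequently
$$
P^+_\omega \omega_0 = \frac{1}{1+c_\omega^2}\bigl(\omega_0 + c_\omega \star \omega_0\bigr).
$$
Here $\omega_0$ lies in the spatial subspace $\sspan\{e_i^* \wedge e_j^* : 1 \leq i < j \leq 3\}$, while $\star\omega_0$ lies in the disjoint mixed subspace $\sspan\{e_0^* \wedge e_i^* : i = 1,2,3\}$. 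Vanishing of $P^+_\omega \omega_0$ therefore forces $\omega_0 = 0$, giving $a_0 = a_2 = a_3 = 0$, and combined with $a_1 = 0$ this yields $a = 0$.

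The main obstacle is the away-from-fixed-point case: there $P^+_\omega$ is a nontrivial blend of $1$, $c_\omega \star$ and the $s_\omega^2 \kappa \wedge \iota_v/\|v\|^2$ piece in \eqref{eq--projector}, so a priori $\ker P^+_\omega$ could intersect the image of $a \mapsto \xi \wedge (1-P_v)a + \star(\xi \wedge P_v a)$ nontrivially. This is neutralized by the observation that the transverse condition $\xi_v = 0$ forces $\omega_0$ to contain no $\kappa$-component, which kills the $P_v$ term in $P^+_\omega$ and splits the remainder cleanly into independent spatial and mixed parts, on which the spatial part alone controls the vanishing.
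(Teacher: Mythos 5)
Your proof is correct, and it takes a genuinely different (though morally related) route from the paper's. The paper works intrinsically: from $\sigma_\omega(x,\xi)a=0$ it expands $P^+_\omega$, wedges the resulting identity $0=\xi\wedge A+\star(\xi\wedge B)$ with $\xi\wedge B$ (with $A=(1-P_v)a+c_\omega P_va$, $B=c_\omega(1-P_v)a+P_va$) to obtain a vanishing norm $\|\xi\wedge A\|^2=0$, deduces $A=h\xi$, uses $\langle\xi,a\rangle=0$ and $\xi\perp\kappa$ to force $h=0$ and hence $a=0$, and then proves surjectivity by exhibiting an explicit preimage via $\iota_\xi\chi$. You instead (i) split off the fixed-point locus, where the symbol visibly reduces to the standard folded SD/ASD complex; (ii) away from fixed points, work in an orthonormal frame adapted to $v$ so that the transversality condition becomes $\xi_0=0$, reducing the whole check to the observation that $\omega_0$ and $\star\omega_0$ live in complementary subspaces; and (iii) replace the surjectivity computation by a clean rank count (both $\sigma_\omega$-bundles have rank $4$, and together with the Laplacian piece source and target both have rank $5$). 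Your route is more elementary and arguably more transparent about \emph{why} the projector's kernel cannot meet the image (the transverse condition kills the $P_v$ cross-term and cleanly separates spatial from mixed two-forms); the paper's route is frame-free and also gives an explicit inverse. One small omission on your side: you do not verify that the complex actually \emph{fails} to be elliptic along $v$ (the paper notes this happens on the locus $c_\omega=0$), but this is not logically required for transversal ellipticity, so it is a matter of emphasis rather than a gap.
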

        \begin{proof}
            Let $C:=\{x\in X|c_\omega=0\}$. Note that for $x\in C$, $\xi\in T_xX$ such that $\xi$ points in the direction of $v$, the linear map over the fiber simplifies to
            \begin{equation*}
                \sigma(x,\xi):(a,\varphi)\longmapsto(0,\|\xi\|^2\varphi,0).
            \end{equation*}
            This linear map clearly is not invertible. Thus, the symbol map cannot be an isomorphism along $v$ over $C$ and $\sigma$ cannot be elliptic. However, note that the $\sigma(\Delta)$-summand of the symbol is indeed elliptic (recall that $X$ is compact), so the failure in ellipticity of $\sigma$ can be attributed completely to $\sigma_\omega$. Now let us consider the restriction of $TX$ to the subbundle transversal to the $H$-action\footnote{Depending on whether or not the orbit of the Lie algebra generator of $v$ is closed, the isometry group is effectively $S^1$ or the full $T^2$. Both cases are denoted by $H$.},
            \begin{equation*}
                T_HX=\{V\in TX|\forall u\in\Lie H:\langle V,(u^\#)_{\pi(V)}\rangle=0\}\subset TX.
            \end{equation*}
            Here, $u^\#$ denotes the fundamental vector field corresponding to $u$. We claim that
            \begin{equation*}
                \sigma_\omega|_{T_HX}:a\longmapsto\left(P^+_\omega\left[\xi\wedge (1-P_v)a+\star(\xi\wedge P_va)\right],-\langle\xi,a\rangle\right)
            \end{equation*}
            is an elliptic symbol, i.e. the map above is an isomorphism outside the zero-section in $T_HX$:
            \begin{itemize}
                \item Injectivity: Let $\sigma_\omega(x,\xi)a=0$ for $x\in X$ and $\xi\in T_{H,x}X\backslash\{0\}$ arbitrary but fixed. This gives 
                \begin{equation*}
                    P^+_\omega[\xi\wedge(1-P_v)a+\star(\xi\wedge P_va)]=0,\qquad\langle\xi,a\rangle=0.
                \end{equation*}
                Using the explicit form \eqref{eq--projector} of $P^+_\omega$ for the first equation gives
                \begin{equation*}
                    0=\xi\wedge\left((1-P_v)a+c_\omega P_va\right)+\star\left(\xi\wedge\left(c_\omega(1-P_v)a+P_va\right)\right).
                \end{equation*}
                We use this to write
                \begin{align*}
                    0&=\xi\wedge\left((1-P_v)a+c_\omega P_va\right)\wedge\xi\wedge\left(c_\omega(1-P_v)a+P_va\right)\\
                    &=\big(\xi\wedge((1-P_v)a+c_\omega P_va)\big)\wedge\star\big(\xi\wedge((1-P_v)a+c_\omega P_va)\big)\\
                    &=\|\xi\wedge((1-P_v)a+c_\omega P_va)\|^2
                \end{align*}
                which implies $(1-P_v)a+c_\omega\frac{\iota_va}{\|v\|^2}\kappa =\xi\wedge h$ for some zero-form $h$. We now apply $\langle\xi,a\rangle=0$, using that $\xi$ and $\kappa$ are orthogonal, which implies $h=0$. Thus,
                \begin{equation*}
                    (1-P_v)a+c_\omega P_va=0\quad\Longrightarrow\quad (1-P_v)a=0,\,P_va=0.
                \end{equation*}
                But then $a=0$.
                \item Surjectivity: Let $(\chi,b)\in(\pi|_{T_HX})^\ast(P^+_\omega\Lambda^2_X\oplus\Lambda^0_X)$. For $x\in X$ and $\xi\in T_{H,x}X\backslash\{0\}$ arbitrary but fixed, choose\footnote{In order to see that $\sigma_\omega|_{T_HX}(x,\xi)a=(\chi,b)$ we use $P_v\iota_\xi=\iota_\xi P_v$ (remember $\langle\xi,\kappa\rangle=0$) as well as $\star P_v\star B=(1-P_v)B$ and $\iota_\xi(\xi\wedge\star B)=\star(\xi\wedge\iota_\xi B)$ for any two-form $B$.}
                \begin{equation*}
                    a=\frac{1}{\|\xi\|^2}\big((1+c_\omega^2)(1-P_v)\iota_\xi\chi+2c_\omega P_v\iota_\xi\chi+s_\omega^2P_v\iota_\xi\star\chi-b\xi\big).
                \end{equation*}
            \end{itemize}
        \end{proof}
        As a corollary of the proposition above we have that $[\sigma|_{T_HX}]\in K_H(T_HX)$, by virtue of $X$ being compact. Moreover, due to $(g\circ f)$ being an isomorphism we have $[\sigma(\eth)|_{T_HX}]=[\sigma|_{T_HX}]$. Thus, we can compute the equivariant index of \eqref{eq--complexH} with respect to $H$ via the topological index \eqref{eq--index},
        \begin{equation}\label{eq--defindex}
            \iind\eth=\ind_H[\sigma|_{T_HX}]\in\mathcal{D}^\prime(H).
        \end{equation}
        We henceforth only consider the symbol $\sigma|_{T_HX}$ and the index $\ind_H$ and therefore drop the subscripts in the following.
        The remaining task is now to compute the right-hand side of \eqref{eq--defindex}.

\section{Trivialization of the Symbol Complex}\label{sec--2}
    
    Explicit cohomological formulas for the equivariant index of transversally elliptic operators have been introduced by Berline, Vergne \cite{Berline:1996} and Paradan, Vergne \cite{Paradan:2009}. Although these formulas could in principle be employed to compute \eqref{eq--defindex}, they involve an integral of equivariant characteristic classes over a non-compact space whose computation is, in general, quite involved. The work of this section therefore follows the original, $K$-theoretic treatment of the index computation by Atiyah \cite{Atiyah:1974} which proves to be easier in our case. 

    The idea of the subsequent procedure is to split $K_H(T_HX)$ into smaller spaces over which we have good control. More precisely, we have seen in the last section that, at the fixed points, the complex (given by a sum of SD/ASD and Laplacian) is actually elliptic. Therefore, it is desirable to express the full symbol in terms of the ones at just the fixed points. This can be done by constructing a new symbol which reduces to the original one at, and whose support (i.e. ``the part that contributes to the index'') reduces to the fixed points. However, this has to be done such that the new symbol is homotopic to the old one, which guarantees that their index agrees (cf. \cite{Atiyah:1974} Theorem 2.6). The construction of this new symbol is the objective of this section.
     
    First, note that we have $[\sigma]=[\sigma_\omega]+[\sigma(\Delta)]$ from the previous section. Since the index of the Laplacian is easy to compute straight away, we ignore it for now and focus on $[\sigma_\omega]$. Note that the index homomorphism acts on $K_H(T_HX)$, while our elliptic complexes live over the (isolated) fixed points, the set of which is denoted by $Y$ in the following. Therefore, we need to find a map that extends those elliptic complexes, in a consistent manner (see below), over all of $T_HX$. Luckily, \cite{Atiyah:1974} provides us with the existence of such a map:

    For a generic $H$-action on $X$ we get a decreasing filtration
    \begin{equation}\label{eq--filtration}
        X=X_0\supset X_1\supset X_2\supset X_3=\emptyset
    \end{equation}
    with $X_i:=\{x\in X|\dim H_x\ge i\}$ and $H_x$ being the stabilizer of $x$. The sets $X_i-X_{i+1}=\{x\in X|\dim H_x=i\}$ are finite unions of locally closed submanifolds of $X$. Specifically, $X_1$ is the submanifold of fixed points of an $S^1$-subgroup of $H$ whereas $X_2=:Y$ is the set of isolated torus fixed points. For the filtration \eqref{eq--filtration} it was shown in \cite{Atiyah:1974} that there exist homomorphisms $\theta_i$ and split short exact sequences
    \begin{equation}\label{eq--SES}
        \begin{tikzcd}[column sep=small]
            0\ar[r] & K_H(T_H(X-X_i))\ar[r] & K_H(T_H(X-X_{i+1}))\ar[r,shift right] & \ar[l,"\theta_i"',shift right] K_H(T_HX|_{X_i-X_{i+1}})\ar[r] & 0
        \end{tikzcd}
    \end{equation}
    that can be used to, recursively, arrive at the decomposition
    \begin{equation}\label{eq--decomposition}
        K_H(T_HX)=\bigoplus_{i=0}^2\theta_i K_H(T_HX|_{X_i-X_{i+1}}).
        \end{equation}
    Hence, by virtue of the filtration, we can ``break up'' the symbol class $[\sigma_\omega]\in K_H(T_HX)$ into simpler pieces living in $K_H(T_HX|_{X_i-X_{i+1}})$ via some maps $\theta_i$. However, doing this for a generic element in $K_H(T_HX)$ requires knowledge about all levels, not just the top one, $K_H(T_HX|_Y)$ (note that $T_HX|_Y=TX|_Y$).
    In order to be able to construct the new symbol homotopic to $\sigma_\omega$ we have to make the following assumption, which we will justify in due course:
    \begin{assum}\label{as--main}
        There exists some global vector field induced by the group action that can be used to trivialize the symbol $\sigma_\omega$ everywhere outside of (the zero-section over) $Y$.
    \end{assum}
    \noindent Here, what we mean by ``trivialising'' $\sigma_\omega$ is to find the new symbol of said properties. As it turns out, the desired vector field can be identified as the Killing vector field $v$ emerging from the superalgebra\footnote{More precisely, in the following we use the vector field that enters in the definition of $P^+_\omega$ which does not necessarily have to be $v$. However, both choices yield isomorphic subbundles of $\Omega^2$ \cite{arXiv:1812.06473}.}. 
    \autoref{as--main} implies that we obtain $[\sigma_\omega]\in K_H(T_HX)$ as
    \begin{equation}\label{eq--assumption}
        [\sigma_\omega]=[0]+\theta_2[\sigma_\omega|_Y]\in K_H(T_H(X-Y))\oplus \theta_2K_H(TX|_Y),
    \end{equation}
    using \eqref{eq--SES} for $i=2$ (for if there was a non-trivial contribution from $K_H(T_H(X-Y))$, the resulting  symbol would have support also outside of $Y$). Thus, given $\sigma_\omega|_Y$, $[\sigma_\omega]$ is entirely determined by the homomorphism $\theta_2$ which we construct momentarily.
    
    \begin{const}\label{con--theta2}
        We want $\theta_2$ to extend the symbol class $[\sigma_\omega|_Y]$ at the torus fixed points to a class in $K_H(T_HX)$. At $TX|_Y$, the symbol is smooth and elliptic and therefore can be extended to $TU$ over some (possibly small) open neighborhood $U\supset Y$ while preserving ellipticity\footnote{Use a retraction $r:U\rightarrow Y$ which induces $r^\ast:K_H(TY)\rightarrow K_H(TU)$.}. Note that, since $U$ is open, the zero-section over $U$ is not compact and thus the extension of $\sigma_\omega|_Y$ restricted to $T_HU$ does not have compact support. This can be remedied by pushing the support away from the zero-section on $U-Y$ along the vector field $v$ (which, by \autoref{as--main} only vanishes at $Y$) such that, on $T_HU$, the support reduces to $Y$ which is again compact. We can perform the push via two maps
        \begin{equation}\label{eq--deformationmap}
            f^\pm:T_HU\rightarrow TU,\,\,(x,\xi)\mapsto(x,\xi\pm g(|\xi|)v(x))
        \end{equation}
        depending on whether we push in the direction of $v$ or against it. Here, $(x,\xi)\in U\times T_{H,x}U$ are local coordinates and $g$ can be taken to be a bump function
        \begin{align*}
            &g:\mathbb{R}_{\ge0}\rightarrow[0,1],\,u\mapsto
            \begin{cases}
                \exp(1-\frac{1}{1-u^2}), & 0\le u<1\\
                0, & u\ge1    
            \end{cases}.
        \end{align*}  
        The role of $g(|\xi|)$ is to make sure that, far away from the zero-section, \eqref{eq--deformationmap} returns to the identity.
        Schematically, the deformation can be depicted as follows:
        \begin{center}
            \begin{tikzpicture}[x=2.5cm]
                \draw[dashed,thick] (-1.5,0)--(1.5,0);
                \fill (0,0) circle [radius=.08cm] node[black] [below]{$0_{T_HX}$};
                \draw[thick,domain=-.99:.99,smooth,variable=\t] plot (\t,{exp(1-1/(1-\t*\t))});
                \draw[thick] (-1.5,0)--(-.99,0);
                \draw[thick] (.99,0)--(1.5,0);
				\draw[-stealth] (0,.2)--(0,.8);
				\draw[-stealth] (-.4,.2)--(-.4,.55);
				\draw[-stealth] (.4,.2)--(.4,.55);
				\fill (0,1) circle [radius=.08cm] node [above]{$\pm v$};
				\node at (.7,0) [below]{$T_HX$};
				\node at (.7,.7) [right]{$f^\pm(T_HX)$};
            \end{tikzpicture}
        \end{center}
        We call $f^\pm$ the deformation maps; they induce homomorphisms
        \begin{equation*}
            (f^\pm)^\ast:K_H(TU)\rightarrow K_H(T_HU),\,\,[\sigma]\mapsto[\sigma\circ f^\pm].
        \end{equation*} 
        The resulting deformed symbol can then be extended to an element in $K_H(T_HX)$ using the natural extension homomorphism\footnote{The natural extension homomorphism for an open inclusion $\iota:U\rightarrow X$ is obtained as the induced homomorphism from the map $X^+\rightarrow X^+/(X^+-U^+)\simeq U^+$, where the $+$ superscript denotes one-point compactification of the space (in particular, $X^+=X\cup{\mathrm{pt}}$ for compact spaces $X$). Intuitively, this map projects all the stuff living outside of $U$ to just a point, since we only care about what happens on $U$. For the induced map, this translates to a ``trivial extension''.} $\iota_!$ for the open inclusion $\iota:T_HU\hookrightarrow T_HX$.
    \end{const} 

    We summarise the construction above in the following definition.
    
    \begin{defi}(Extension homomorphism $\theta_2$)
        Let $X$ be a compact, simply-connected smooth manifold with a $T^2$-action producing a discrete set $Y$ of torus fixed points. Let $U$ be an open neighborhood around $Y$ and $r:U\rightarrow Y$ a retraction. We define the \textit{extension homomorphism} 
        \begin{equation}\label{eq--exthom}
            \begin{tikzcd}[column sep=small]
                \theta_2: K_H(TX|_Y)\ar[r,"r^\ast"] & K_H(TU)\ar[r,"f^\ast"] & K_H(T_HU)\ar[r,"\iota_!"] & K_H(T_HX)
            \end{tikzcd}
        \end{equation}
        with $f^\ast$ induced by \eqref{eq--deformationmap}.
    \end{defi}
    
    \begin{remark}
        The construction of $\theta_2$ is independent of the choice of retraction $r$ and also of the choice of open neighborhood $U$ (so long as the extension to $K_H(TU)$ is still elliptic).    
    \end{remark}
    
    \begin{eg}\label{ex--c2}
        Let us illustrate the idea of the construction on a simple example\footnote{Within the example, for clarity, we restore the proper notation for cotangent spaces.}. Consider $X=\mathbb{C}^2$ (i.e. we are only looking at one patch) and the complex
        \begin{equation*}
            \bar{\partial}:\Omega^{0,0}\rightarrow\Omega^{0,1}\rightarrow\Omega^{0,2}.
        \end{equation*}
        The vector field $v$ is taken to be
        \begin{equation*}
            v=\mathrm{i}\epsilon_1(z\partial_z-\bar{z}\partial_{\bar{z}})+\mathrm{i}\epsilon_2(w\partial_w-\bar{w}\partial_{\bar{w}})
        \end{equation*}
        with $\epsilon_1,\epsilon_2\neq0$ (NB: if $\epsilon_1/\epsilon_2\not\in\mathbb{Q}$ then the orbit of $v$ is dense in $T^2$) and one fixed point at the origin of $\mathbb{C}^2$ ($Y=\{0\}$). Since $\Omega^{0,0}$ is just the space of section of a trivial complex line bundle over $\mathbb{C}^2$ and $\Omega^{0,1},\Omega^{0,2}$ the spaces of section in $T^\ast_{0,1}\mathbb{C}^2,\Lambda^2T^\ast_{0,1}\mathbb{C}^2$, the symbol is given by
        \begin{equation*}
            \sigma:\pi^\ast(\mathbb{C}^2\times\mathbb{C})\rightarrow \pi^\ast (T^{\ast}_{0,1}\mathbb{C}^2)\rightarrow\pi^\ast (\Lambda^2T^{\ast}_{0,1}\mathbb{C}^2),
        \end{equation*}
        where $\pi:T^\ast\mathbb{C}^2\rightarrow\mathbb{C}^2$. On elements in the fibre over $((z,w),\xi)\in T^\ast\mathbb{C}^2$ the symbol acts as 
        \begin{equation}\label{eq--exsymb}
            \sigma((z,w),\xi)\,u=\xi^{0,1}\wedge u.
        \end{equation}
        Note that $\sigma$ is already elliptic on $\mathbb{C}^2$ (it is an isomorphism for $\xi^{0,1}\neq0$). Therefore, starting on $Y$, the extension to an open neighborhood is simply \eqref{eq--exsymb}. Next, in order to reduce the support of \eqref{eq--exsymb} to $Y$ we apply $f^\ast$ which, as stated above, acts by precomposition of \eqref{eq--deformationmap}:
        \begin{align}\label{eq--exsymbdef}
            f^\ast\sigma((z,w),\xi_T)&=\sigma((z,w),\xi_T\pm g(|\xi_T|)\kappa)\nonumber\\&=\xi_T^{0,1}\mp\begin{cases}
                \exp(1-\frac{1}{1-|\xi_T|^2})\mathrm{i}(\epsilon_1z\mathrm{d}\bar{z}+\epsilon_2w\mathrm{d}\bar{w}), & |\xi_T|\in[0,1)\\
                0, & \text{else}
            \end{cases},
        \end{align}
        where $\xi_T$ denotes elements transversal to $\kappa=g(v,\cdot)$, i.e. in the radial directions of $\mathbb{C}^2$. Therefore, the new symbol \eqref{eq--exsymbdef} is invertible everywhere (including the zero-section $\xi_T=0$), except when $z=w=0$, i.e. except at the fixed point. 
        The role of $g(|\xi_T|)$ is to make sure that, far away from the zero-section ($|\xi_T|>1$), \eqref{eq--exsymbdef} returns to \eqref{eq--exsymb} (restricted to $\xi_T$).
        It is easy to see that \eqref{eq--exsymbdef} is a continuous deformation of \eqref{eq--exsymb} (restricted to $\xi_T$).
        Finally, in this simple setup, we can compute the elements in the cohomology of, say, the $f^+$-deformed operator $\bar{\partial}-\mathrm{i}(\epsilon_1z\mathrm{d}\bar{z}+\epsilon_2w\mathrm{d}\bar{w})$ explicitly:
        \begin{equation}\label{eq--excoh}
            \begin{aligned}
                H^0:&\quad\{z^nw^m\e{\mathrm{i}\epsilon_1|z|^2+\mathrm{i}\epsilon_2|w|^2}\}_{n,m\in\mathbb{N}_{0}},\\
                H^1:&\quad\{\bar{z}^nw^m\e{\mathrm{i}\epsilon_1|z|^2+\mathrm{i}\epsilon_2|w|^2}\mathrm{d}\bar{z}, z^n\bar{w}^m\e{\mathrm{i}\epsilon_1|z|^2+\mathrm{i}\epsilon_2|w|^2}\mathrm{d}\bar{w}\}_{n,m\in\mathbb{N}_0},\\
                H^2:&\quad\{\bar{z}^n\bar{w}^m\e{\mathrm{i}\epsilon_1|z|^2+\mathrm{i}\epsilon_2|w|^2}\mathrm{d}\bar{z}\,\mathrm{d}\bar{w}\}_{n,m\in\mathbb{N}_0}.
            \end{aligned}
        \end{equation}
        Here, for simplicity, we have neglected $g$ in the deformation.
    \end{eg}

    In the definition for $\theta_2$ we were deliberately vague about which of the two maps in \eqref{eq--deformationmap} induce $f^\ast$. The reason for this is the following: Since $Y$ is a discrete set, $U$ a priori is a disjoint union $\bigsqcup_{j\in I}U_j$ of open neighborhoods $U_j$ around each fixed point ($I=\{1,\dots,|Y|\}$) so, in particular, we could use $f^+$ around some fixed points and $f^-$ around the remaining ones. However, if we want to satisfy \autoref{as--main} using our vector field $v$ to push (which is globally defined), it turns out that there are only two valid choices of deformations, which turn out to be equivalent on the level of the index. We make the following

    \begin{claim}\label{cl--claim}
        Given the symbol class $[\sigma_\omega|_Y]\in K_H(TX|_Y)$, after a choice of $f^\pm$ on a neighborhood $U_1$ of one fixed point, there is a unique assignment of deformation maps for all other components of $U=\bigsqcup_{j\in I}U_j$ which either matches the distribution of SD/ASD complexes at the fixed points or its opposite.
    \end{claim}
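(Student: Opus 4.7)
The plan is to reduce the claim to two observations: at a single fixed point, the choice of $f^\pm$ determines in $K$-theory whether the deformed symbol represents the SD or ASD complex class; globally, the vector field $v$ is defined coherently across $X$, so an initial choice of $f^\pm$ at $U_1$ propagates uniquely to all other patches through a single global sign.

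First I would work locally around an individual fixed point $l$. Using \autoref{ex--c2} as a template, I would extend $\sigma_\omega|_l$ via $r^\ast$ to $TU_l$ and then compute the $K$-theory class of $\sigma_\omega\circ f^\pm$ in $K_H(T_HU_l)$ by an explicit coordinate calculation with $v$ in its linear form at $l$. The essential local claim to establish is that at a fixed point with $\cos\omega=+1$, the two choices $f^+$ and $f^-$ yield representatives of two distinct elliptic classes corresponding, under the isomorphism $\sigma_\omega|_l\cong\sigma(\mathrm{d}^+)$, to the SD and ASD complexes respectively, with the roles swapped at fixed points where $\cos\omega=-1$. Concretely this amounts to a canonical bijection $\{f^+,f^-\}\leftrightarrow\{\mathrm{SD},\mathrm{ASD}\}$ at each fixed point, dictated by the sign of $\cos\omega$ there.

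Next I would use the globality of $v$ to coordinate these local choices. Since $v$ is a single globally defined vector field, the expression $\xi\pm g(|\xi|)v(x)$ in \eqref{eq--deformationmap} makes unambiguous sense on the whole of $X$, so fixing the sign $\pm$ of the deformation on $U_1$ forces exactly the same sign on every other patch $U_j$. Combined with the local bijection above, this yields one of precisely two consistent global SD/ASD assignments. Swapping the initial sign at $U_1$ is equivalent to the global antipodal operation $v\mapsto -v$, which simultaneously flips every local bijection and therefore toggles between the two possible assignments; these are the ``matching'' distribution and its ``opposite'', as claimed.

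The step I expect to be the main obstacle is the local bijection in the first paragraph. Verifying it rigorously requires a careful coordinate-level computation of $\sigma_\omega\circ f^\pm$ together with a check, analogous to \eqref{eq--excoh}, that the resulting representatives lie in the expected $K$-theory class and not merely in one of the right cohomological dimension. Since $P^+_\omega$ depends on both $\star$ and $\iota_v$ in a manner sensitive to $\cos\omega$, the sign bookkeeping here is delicate, but it is fundamentally a local Thom-class orientation issue that should be resolvable by the same methods used in \autoref{ex--c2}.
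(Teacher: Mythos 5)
Your argument contains a genuine gap, and in fact the key mechanism you invoke — propagation of the deformation sign — runs in the \emph{opposite} direction from what actually happens. You argue that because the vector field $v$ is globally defined, fixing the sign $\pm$ of the deformation on $U_1$ ``forces exactly the same sign on every other patch $U_j$.'' If that were true, the assignment would be $f^+$ (or $f^-$) uniformly at \emph{every} fixed point, which would match the SD/ASD distribution only in the degenerate case where all fixed points carry the same type. The claim you are trying to prove says the assignment tracks the distribution $\{s_l\}$, so in general it must assign \emph{opposite} signs at an SD and an ASD fixed point; globality of $v$ alone cannot produce this alternation.

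What is actually responsible for the sign assignment is a compatibility condition between the \emph{extended symbols} themselves, not the global definition of $v$. The fixed-point symbols $\sigma_\omega|_{F_1}$ and $\sigma_\omega|_{F_2}$ in \eqref{eq--FPsymbols} are elliptic on all of $TX$ and can therefore be extended to open sets $V_1\supset U_1$, $V_2\supset U_2$ with $V_1\cap V_2\neq\emptyset$. On the overlap, a short computation (using the transition map $m$ of \eqref{eq--transition}) shows that the SD-type symbol evaluated along $+v$ equals the ASD-type symbol evaluated along $-v$, i.e.\ \eqref{eq--cc}. It is \emph{this} relation that forces $f^+$ on one and $f^-$ on the other: using the same sign on both would produce a deformation whose push in the $v$-direction must vanish somewhere on $V_1\cap V_2$, leaving the symbol supported on the zero-section there and contradicting \autoref{as--main}. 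Your proposal never derives a statement of this form, and it is precisely the missing ingredient.

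There is also a secondary confusion in your ``local bijection.'' You assert that at an SD fixed point, $f^+$ and $f^-$ yield representatives of the SD and ASD classes respectively. This is not the case: the complex type at a fixed point is fixed by $\cos\omega$ and does not depend on the deformation map. Both $f^+$ and $f^-$ produce classes coming from the \emph{same} local complex; what they change is which Laurent expansion $(\cdot)^\pm$ regularizes the local index contribution (\eqref{eq--laurent}), as the $\mathbb{C}^2$ example makes explicit through the cohomologies \eqref{eq--excoh}. Conflating the choice of $f^\pm$ with the choice of SD vs.\ ASD leads to the misleading conclusion that the deformation maps ``determine'' a distribution rather than being dictated by the given one. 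The correct logical flow is: the distribution is given, the overlap relation \eqref{eq--cc} forces the relative signs, and only the initial overall choice at $U_1$ remains ambiguous (matching vs.\ opposite).
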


    This can be seen as follows: Let $F_1,F_2\in Y$ be two fixed points and $U_1,U_2$ the open neighborhoods around $F_1,F_2$, respectively. Moreover, let $\cos\omega=1$ at $F_1$ and $\cos\omega=-1$ at $F_2$, i.e. the symbol $\sigma_\omega$ is isomorphic to the SD one at $F_1$ and to the ASD one at $F_2$. Concretely, from \eqref{eq--symbolaction} we obtain
    \begin{subequations}\label{eq--FPsymbols}
        \begin{align}
            \sigma_\omega|_{F_1}(x,\xi):a&\longmapsto (P^+[\xi\wedge(1-P_v)a+\star(\xi\wedge P_va)],-\langle\xi,(1-P_v)a+P_va\rangle),\\
            \sigma_\omega|_{F_2}(x,\xi):a&\longmapsto (P^-[\xi\wedge(1-P_v)a+\star(\xi\wedge P_va)],-\langle\xi,(1-P_v)a-P_va\rangle)\label{eq--symbolSD}
        \end{align}
    \end{subequations}
    as the representatives
    % \footnote{Note that \eqref{eq--symbolSD} is not exactly $\sigma_{\omega}$ for $\cos\omega=-1$ as obtained from \eqref{eq--symbolaction} but is isomorphic to it (which is just as good since we are interested only in the symbol class).} 
    of the respective symbol class $[\sigma_\omega|_{F_{1,2}}]$.

    Since the symbols \eqref{eq--FPsymbols} are in fact elliptic on all of $TX$ (remember that, at the fixed points, they are isomorphic to SD/ASD), we can extend $U_1,U_2$ to open sets $V_1\supset U_1,V_2\supset U_2$ such that $V_1\cap V_2\neq\emptyset$ (in fact, since $X$ is compact, we can extend the neighborhoods of all the fixed points to an open cover of $X$). Applying the natural extension homomorphism in \eqref{eq--exthom} should yield a symbol homotopic to $\sigma_\omega$ and such that \eqref{eq--assumption} holds; this is only true if the deformation of the symbols indeed extends in a compatible way on the intersection. It is immediate from \eqref{eq--FPsymbols} that on $V_1\cap V_2$, we have the following equality\footnote{In order for this equality to be meaningful, the first summand of one of the two symbols has to be mapped from $\Omega^{2+}$ to $\Omega^{2-}$ or vice versa. The construction of $P^+_\omega$ (cf. \cite{arXiv:1812.06473}) dictates to use $-m$ in \eqref{eq--transition} for this map.}:  
    \begin{equation}\label{eq--cc}
        \sigma_\omega|_{F_1}(x,v(x))=\sigma_\omega|_{F_2}(x,-v(x)).
    \end{equation}
    Hence, in order to be able to trivialize the symbol everywhere on $T(V_1\cup V_2)$ we need to deform along the direction of $v$ on $V_1$ and against it on $V_2$, i.e. we use $f^+$ on $T_HU_1$ and $f^-$ on $T_HU_2$. For if we were to use, say, $f^+$ on both, the resulting symbol would no longer be a continuous deformation, unless the push in $v$-direction vanishes on some subset of $V_1\cap V_2$. But then the resulting symbol would be supported on the zero-section over that subset which contradicts \autoref{as--main}.
    
    If at $F_2$ the symbol was instead isomorphic to the SD one, the deformations would be trivially compatible on $V_1\cap V_2$. 
    Clearly, by checking compatibility according to \eqref{eq--cc} for all fixed points, one finds that the assignment of $f^\pm$ is dictated precisely by the distribution of SD/ASD complexes at the fixed points. Note that we equally could have chosen to deform with $f^-$ on $T_HU_1$ and $f^+$ on $T_HU_2$. For example, on some $X$ with three fixed points and complexes $(F_1,\text{SD}),(F_2,\text{ASD}),(F_3,\text{ASD})$, we can use either $f^+,f^-,f^-$ or $f^-,f^+,f^+$, respectively. However, we will find in \autoref{sec--3} that both choices yield the same index, i.e. the ambiguity gets resolved on the level of the index.

    \begin{remark}\label{rem--homotopy}
        Let us recapitulate here what our argument was. If \autoref{as--main} holds true, then knowledge of $\theta_2$ is enough to give a description of the symbol $[\sigma_\omega]$ in $K_H(T_HX)$ entirely in terms of its description at the fixed points. But our construction above shows that, using the Killing vector field $v$ provided by the supersymmetry background, we can always construct a $\theta_2$ such that the assumption is satisfied: this is ensured by choosing $\theta_2$ such that \eqref{eq--cc} holds, which makes sure that the support of the symbol is pushed off the zero-section in $v$-direction everywhere on $X$ except the fixed points. 
        % If instead, we were to push in the ``wrong'' direction on a patch (assuming continuity of the resulting symbol), this would imply that at some points on the overlap with other patches the deformation in $v$-direction must vanish, hence the symbol would be supported on the zero-section above these points.
        
        Moreover, it is easy to see that $\theta_2\sigma_\omega|_Y$ is a continuous deformation of $\sigma_\omega$ (it is continuously deformed along $v$ in a neighborhood of the zero-section in $T_HU_i$ and the condition \eqref{eq--cc} ensures that this deformation extends continuously on $T_HX$), i.e. the symbols are homotopic and thus, $[\sigma_\omega]=\theta_2[\sigma_\omega|_Y]\in K_H(T_HX)$.
        
        Finally, note that the construction of $\theta_2$ only depends on the form of the symbol at the fixed points and not at any intermediate points in the manifold. 
    \end{remark}

    \begin{remark}
        Note that, while our choice of $v$ and, correspondingly, \autoref{con--theta2} indeed satisfies \autoref{as--main} for the symbol of \eqref{eq--complex}, this is not, in general, the case for other symbols. In order to determine their decomposition in \eqref{eq--decomposition} one would have to construct the maps $\theta_i$ for $i<2$ corresponding to the lower levels of the filtration \eqref{eq--filtration} too.
    \end{remark}

\section{Index Computation}\label{sec--3}

    Before we determine the index of $\sigma$ for the complex \eqref{eq--Symbol} using the machinery developed in \cite{Atiyah:1974}, let us consider the following
    
    \begin{eg}
        We return to the setting in \autoref{ex--c2} were we computed explicitly the kernel and cokernel of the deformed operator. Suppose now that $\epsilon_1,\epsilon_2$ have a small imaginary part and suppose further that $\IM\epsilon_1,\IM\epsilon_2>0$. Then, only the exponentially decaying modes are allowed, i.e. 
        \begin{equation*}
            H^0=\{z^nw^m\e{\mathrm{i}\epsilon_1|z|^2+\mathrm{i}\epsilon_2|w|^2}\}_{n,m\in\mathbb{N}_0},\qquad H^1=\emptyset,\qquad H^2=\emptyset
        \end{equation*}
        (since either $\epsilon_1$ or $\epsilon_2$ turns into $\bar{\epsilon}_1$, respectively $\bar{\epsilon}_2$ for $H^1$ and for $H^2$ both of them do). Hence, denoting the basic character of the two $S^1$ by $t_1$, $t_2$, the equivariant index in this case gives
        \begin{equation*}
            \iind=\sum_{n,m\in\mathbb{N}_0}t_1^nt_2^m=\sum_{n\in\mathbb{N}_0}t_1^n\sum_{m\in\mathbb{N}_0}t_2^m=\left(\frac{1}{1-t_1}\right)^+\left(\frac{1}{1-t_2}\right)^+
        \end{equation*}
        where the notation in the last equality is explained below and can be ignored for now. Had we deformed the complex using $f^-$, all exponentials in \eqref{eq--excoh} would have a minus sign in the exponent and $H^2$ becomes the relevant one, hence
        \begin{equation*}
            \iind=\sum_{n,m\in\mathbb{N}_0}t_1^{-(n+1)}t_2^{-(m+1)}=\sum_{n\in\mathbb{N}}t_1^{-n}\sum_{m\in\mathbb{N}}t_2^{-m}=\left(\frac{1}{1-t_1}\right)^-\left(\frac{1}{1-t_2}\right)^-.
        \end{equation*}
        Although this is the basic idea of the index computation, the complex \eqref{eq--complex} is much more complicated, so we resort to more abstract methods below. 
    \end{eg}
    
    We have seen in the last section that $[\sigma]=[\sigma_\omega]+[\sigma(\Delta)]$ and $\sigma_\omega$ can be replaced by a symbol $\theta_2\sigma_\omega|_Y$ which is supported only at the fixed points and homotopic to $\sigma_\omega$. Now we can apply the index homomorphism,
    \begin{equation}\label{eq--indexsum}
        \iind\eth=\ind\theta_2[\sigma_\omega|_Y]+\ind[\sigma(\Delta)].
    \end{equation}
    First, let us take care of the Laplacian part. Its index vanishes trivially by virtue of $\Delta$ being self-adjoint on $\Omega^\bullet_X$.
    
    Now for the first part in \eqref{eq--indexsum}. Although we have come a long way from reducing the original symbol \eqref{eq--symbol} to essentially $\sigma_\omega|_Y$, it turns out that we can ``break it up'' even further. For this, note that $TX|_Y$ can be viewed as a complex vector bundle over $TY$, namely as its normal bundle. Hence, there is a Thom isomorphism $\phi:K_H(TY)\overset{\simeq}{\longrightarrow} K_H(TX|_Y)$. It was shown in \cite{Atiyah:1974} that this Thom isomorphism acts as multiplication\footnote{There is a product $K_H(TY)\otimes K_{H\times U(2)}(T_H\mathbb{C}^2)\rightarrow K_H(TX|_Y)$, with $\mathbb{C}^2$ the fiber of the normal bundle. See theorem 4.3 in \cite{Atiyah:1974} for details.} by $[\sigma(\bar{\partial})]\in K_{H\times U(2)}(T_H\mathbb{C}^2)$. Hence, our element $[\sigma_\omega|_Y]\in K_H(TX|_Y)$ can be written as a (tensor) product of $[\sigma(\bar{\partial})]$ and some element in $K_H(TY)$ (to be determined below).
    
    Finally, since the symbol $\theta_2[\sigma_\omega|_Y]$ is trivialized everywhere except in neighborhoods of the fixed points, we can apply the excision property in \cite{Atiyah:1974} (theorem 3.7) to yield a sum over local contributions to the index,
    \begin{equation}\label{eq--indexomega}
        \ind\theta_2[\sigma_\omega|_Y]=-\sum_{i\in I}\ind\theta^{s_i}_2[\sigma(\bar{\partial})]\cdot\ind[\sigma_{F_i}].
    \end{equation}
    Here, $s_i\in\{-,+\}$ and $\theta_2^{s_i}$ denotes the extension homomorphism \eqref{eq--exthom} restricted to $U_i$ (i.e. only going the first two steps in \eqref{eq--exthom}), using as deformation map $f^{s_i}$. Note that the product of symbols via Thom isomorphism is respected by the index\footnote{See, for example, theorem 3.5 in \cite{Atiyah:1974}.}. Lastly, $[\sigma_{F_i}]\in K_H(T\{F_i\})\simeq R(H)$ denotes the element such that when multiplied with $[\sigma(\bar{\partial})]$ we obtain $[\sigma_\omega|_{F_i}]\in K_H(TX|_{F_i})$. Note that the symbols on the right in \eqref{eq--indexomega} are over $U_i\simeq\mathbb{R}^4\simeq\mathbb{C}^2$.

    In order to determine the symbol class $[\sigma_{F_i}]$ we use the following propositions, whose proofs are sketched in \autoref{app--isomorphism}:
    
    \begin{prop}\label{prop--iso1}
        For the (complexified) SD complex $(\Omega^\bullet,\mathrm{d}^+)_\mathbb{C}$ on $\mathbb{C}^2$ there is an isomorphism
        \begin{equation}
            (\Omega^\bullet,\mathrm{d}^+)_\mathbb{C}\simeq(\Omega^{0,\bullet}\otimes(\mathcal{O}\oplus\Lambda^{2,0}T^\ast\mathbb{C}^2),\bar{\partial}\otimes 1).
        \end{equation}
    \end{prop}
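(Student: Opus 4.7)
The plan is to exploit the Kähler structure of $\mathbb{C}^2$, which induces the type decomposition $\Omega^k_\mathbb{C}=\bigoplus_{p+q=k}\Omega^{p,q}$ and, on self-dual $2$-forms, the standard Kähler splitting $\Omega^{2+}_\mathbb{C}=\Omega^{2,0}\oplus\Omega^{0,2}\oplus\mathbb{C}\omega$ with $\omega$ the Kähler form. The crucial identification is the Hodge star on the Kähler $4$-manifold (equivalently, contraction with the holomorphic volume form $\mathrm{d}z_1\wedge\mathrm{d}z_2$ trivializing $\Lambda^{2,0}$), which produces the canonical isomorphism $\Omega^{1,0}\simeq\Omega^{0,1}\otimes\Lambda^{2,0}$. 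This immediately gives the desired identification of the middle-degree bundle, $\Omega^1_\mathbb{C}\simeq\Omega^{0,1}\otimes(\mathcal{O}\oplus\Lambda^{2,0})$, matching the degree-one piece of the target complex.

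Next I would check that the differentials are compatible under these identifications. On the $\Omega^{0,\ast}$-summand of $\Omega^1_\mathbb{C}$, the operator $\mathrm{d}^+$ acts as $\bar{\partial}$ directly; on the $\Omega^{\ast,0}$-summand it acts as $\partial$, which under the Hodge-star identification becomes $\bar{\partial}$ on the $\Lambda^{2,0}$-twisted anti-holomorphic forms. Therefore the pair $(\mathrm{d},\mathrm{d}^+)$ decomposes into two copies of $\bar{\partial}$, one on $\mathcal{O}$ and one on $\Lambda^{2,0}$, which assembled yield precisely $\bar{\partial}\otimes 1$ on $\Omega^{0,\bullet}\otimes(\mathcal{O}\oplus\Lambda^{2,0})$. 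The verification on the extreme degrees is essentially dictated by the identifications $\Omega^{2,0}\simeq\Lambda^{2,0}$ and $\mathbb{C}\omega\simeq\mathcal{O}$, together with the standard Kähler identities relating the trace component of $\Omega^{1,1}$ to the $\omega$-part of $\mathrm{d}^+$.

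The main obstacle is that the graded ranks of the two sides do not coincide term by term -- the SD complex (viewed as the three-term sequence $\Omega^0\to\Omega^1\to\Omega^{2+}$) has ranks $(1,4,3)$ whereas the twisted Dolbeault complex has $(2,4,2)$ -- so the ``isomorphism'' must be read in the $K$-theoretic (symbol-class) sense in which the proposition is ultimately used. A quick consistency check is the equality of equivariant characters at the origin: both sides evaluate to
\begin{equation*}
2+t_1t_2+t_1^{-1}t_2^{-1}-t_1-t_2-t_1^{-1}-t_2^{-1},
\end{equation*}
which exhibits the clean factorization $(1+[\Lambda^{2,0}])\bigl([\mathcal{O}]-[\Omega^{0,1}]+[\Omega^{0,2}]\bigr)$ of the right-hand side in $R(T^2)$. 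Promoting this character-level equality to an explicit chain-level equivalence via a homotopy that absorbs the rank mismatch using the Kähler form $\omega$ is the technically delicate part of the proof, and is exactly what the Kähler identities on a Kähler $4$-manifold provide; I would therefore structure the appendix argument around first producing the middle-degree isomorphism via the Hodge star and then extending it at the extremes by the $\omega$-mediated homotopy.
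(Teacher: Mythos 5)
Your proposal matches the paper's own sketch proof in its key insight: both rely on the K\"ahler splitting $\Omega^{2+}\simeq\Omega^{2,0}\oplus\Omega^{0,2}\oplus\mathbb{C}\omega$ together with the identification of the conjugate $\partial$-Dolbeault piece with the $\Lambda^{2,0}T^\ast\mathbb{C}^2$-twisted $\bar\partial$-complex, and both correctly read the isomorphism at the level of $K$-theory symbol classes because of the $(1,4,3)$-versus-$(2,4,2)$ rank mismatch. The paper realizes your ``$\omega$-mediated homotopy'' concretely by unfolding the trivial line $\Omega^{1,1}_\parallel\simeq\mathbb{C}\omega$ from degree two to degree zero before complexifying, a standard $K$-theoretic move that leaves the class invariant.
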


    \begin{prop}\label{prop--iso2}
        For the SD and ASD complexes, $(\Omega^\bullet,\mathrm{d}^+)$ and $(\Omega^\bullet,\mathrm{d}^-)$, on $\mathbb{C}^2$ there is an isomorphism
        \begin{equation}
            (\Omega^\bullet,\mathrm{d}^+)\simeq(\Omega^\bullet,\mathrm{d}^-),
        \end{equation}
        induced by the map $\mathbb{C}^2\ni(z_1,z_2)\mapsto(\bar{z}_1,z_2)$.
    \end{prop}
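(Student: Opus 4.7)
} The plan is to use the pullback $\phi^{*}$ of the diffeomorphism $\phi:\mathbb{C}^2\to\mathbb{C}^2,\,(z_1,z_2)\mapsto(\bar z_1,z_2)$ directly as the chain isomorphism between the SD and ASD complexes. First I would verify two geometric properties of $\phi$: (a) that $\phi$ is an isometry of $\mathbb{C}^2$ endowed with the flat metric $\mathrm{d}z_1\mathrm{d}\bar z_1+\mathrm{d}z_2\mathrm{d}\bar z_2$ (writing $z_1=x_1+\mathrm{i}y_1$, $\phi$ is the reflection $(x_1,y_1,x_2,y_2)\mapsto(x_1,-y_1,x_2,y_2)$, which manifestly preserves the Euclidean metric); and (b) that $\phi$ reverses orientation, since $\det \mathrm{d}\phi=-1$ in the real basis above.

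With these two facts in hand, the standard identity for an orientation-reversing isometry yields $\phi^{*}\star=-\star\,\phi^{*}$ on every degree of forms. In particular, on two-forms one has $\phi^{*}\circ P^{\pm}=P^{\mp}\circ\phi^{*}$, i.e.\ $\phi^{*}$ sends $\Omega^{2\pm}$ isomorphically onto $\Omega^{2\mp}$. Combined with the general fact $\phi^{*}\mathrm{d}=\mathrm{d}\,\phi^{*}$, this produces a commutative diagram
\begin{equation*}
\begin{tikzcd}[column sep=small]
0\ar[r] & \Omega^0\ar[r,"\mathrm{d}"]\ar[d,"\phi^{*}"'] & \Omega^1\ar[r,"P^{+}\mathrm{d}"]\ar[d,"\phi^{*}"'] & \Omega^{2+}\ar[r]\ar[d,"\phi^{*}"'] & 0\\
0\ar[r] & \Omega^0\ar[r,"\mathrm{d}"] & \Omega^1\ar[r,"P^{-}\mathrm{d}"] & \Omega^{2-}\ar[r] & 0
\end{tikzcd}
\end{equation*}
so that $\phi^{*}$ indeed defines a morphism of complexes $(\Omega^\bullet,\mathrm{d}^{+})\to(\Omega^\bullet,\mathrm{d}^{-})$.

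Finally, to conclude it is an isomorphism I would observe that $\phi$ is an involution, $\phi^{2}=\mathrm{id}$, hence $\phi^{*}\circ\phi^{*}=\mathrm{id}$ on each $\Omega^{k}$, so each vertical arrow is its own inverse. I do not anticipate any serious obstacle in this argument; the only point requiring mild care is the sign-tracking for the identity $\phi^{*}\star=-\star\,\phi^{*}$, which can be justified either from the coordinate-free characterisation of $\star$ (via the volume form, whose sign is flipped by $\phi$) or by explicit computation on a basis such as $\{\mathrm{d}x_1\wedge\mathrm{d}y_1,\,\mathrm{d}x_1\wedge\mathrm{d}x_2,\dots\}$. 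Once this sign is established the remainder of the proof is essentially automatic.
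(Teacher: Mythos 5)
Your argument is correct and it is genuinely more streamlined than what the paper does. The paper's sketch first passes through Donaldson's decomposition of the (A)SD complexes into Dolbeault-type pieces (separating $\Omega^{2+}$ into $\Omega^{0,2}\oplus\overline{\Omega^{0,2}}\oplus\Omega^{1,1}_\parallel$ and $\Omega^{2-}$ into $\Omega^{1,1}_\perp$), and then verifies on an explicit basis of $\Omega^{1,1}_\perp$ that the pullback by $(z_1,z_2)\mapsto(\bar z_1,z_2)$ lands in the SD summand and commutes with the coboundary maps. You instead work directly with the de Rham complexes and observe that $\phi$ is an orientation-reversing isometry, so $\phi^\ast\!\star=-\star\phi^\ast$, whence $\phi^\ast P^{\pm}=P^{\mp}\phi^\ast$ on two-forms, and the commutation with $\mathrm{d}$ and the involutivity of $\phi$ finish the job. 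Both arguments are valid; yours avoids the Donaldson decomposition and the explicit basis tracking, at the mild cost of needing to record the sign in $\phi^\ast\!\star=-\star\phi^\ast$ carefully. The paper's route via the $\Omega^{p,q}$ decomposition has the side benefit (used later in the text) of making manifest how the $T^2$-weights transform under the isomorphism, namely that the weight associated to $z_1$ flips; your argument proves the existence of the chain isomorphism cleanly but you would still need a small additional remark to extract the weight flip, which is what the index formula actually consumes.
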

    \noindent From \autoref{prop--iso1} we can directly read off $\sigma_{F_i}$ for $\sigma_\omega|_{F_i}\simeq\sigma(\mathrm{d}^+)$ as the complex of length zero given by\footnote{Since $\mathcal{O}\oplus\Lambda^{2,0}T^\ast\mathbb{C}^2$ is just a vector space, we can simply view it as a bundle over $F_i$ or $T\{F_i\}$, hence its class is in $K_H(T\{F_i\})$.}
    \begin{equation}
        (\mathcal{O}\oplus\Lambda^{2,0}T^\ast\mathbb{C}^2):
        \begin{tikzcd}[column sep=small]
            \dots\ar[r] & 0\ar[r] & 0\ar[r] & \mathcal{O}\oplus\Lambda^{2,0}T^\ast\mathbb{C}^2\ar[r] & 0\ar[r] & 0\ar[r] & \dots
        \end{tikzcd}
    \end{equation} 
    with $\mathcal{O}\oplus\Lambda^{2,0}T^\ast\mathbb{C}^2$ at level zero. Note that the minus sign in \eqref{eq--indexomega} is due to the fact that the symbol at the fixed point is isomorphic to the one of the folded SD/ASD complex (cf. \eqref{eq--ASD+laplacian}) which starts at level one (while the $\bar{\partial}$-complex starts at level zero). 
    
    For $\sigma_\omega|_{F_i}\simeq\sigma(\mathrm{d}^-)$ at a given fixed point one simply applies \autoref{prop--iso2} first. This has the effect of flipping the first weight for the $T^2$-action around $F_i$. Note that, equally well, we could have chosen the isomorphism of \autoref{prop--iso2} to be induced by $(z_1,z_2)\mapsto(z_1,\bar{z}_2)$ instead. One can check that the index \eqref{eq--eqindex} is still the same. 
    
    Finally, we are in a position to explicitly compute the full index of \eqref{eq--complex}. Consider a patch $U_l\simeq\mathbb{C}^2$ around some fixed point $F_l\in Y$ with complex coordinates $(z_1,z_2)$. Let $\epsilon_1,\epsilon_2$ be coordinates on $\Lie G$ such that $t_{1}=\exp(\mathrm{i}\epsilon_{1}),t_2=\exp(\mathrm{i}\epsilon_{2})$ are coordinates on $G$. We can then express the infinitesimal weights $\alpha^{(l)}_1,\alpha^{(l)}_2$ for the group action\footnote{Note that for fixed points with ASD complexes one needs to take the change in (local) complex structure according to \autoref{prop--iso2} into account when determining $\alpha^{(l)}_{ij}$.} on $(z_1,z_2)\in\mathbb{C}^2$ as $\alpha^{(l)}_i=\sum_{j=1}^2\alpha^{(l)}_{ij}\epsilon_j$ with $\alpha^{(l)}_{ij}\in\mathbb{Z}$ for $i,j=1,2$. Then we get
    \begin{equation}
        \ind[\sigma_{F_l}]=1+\prod_{i=1}^2t^{-\alpha^{(l)}_i},\qquad t^{\alpha^{(l)}_i}:=\prod_{j=1}^2t_j^{\alpha^{(l)}_{ij}}.
    \end{equation}
    The first factor in \eqref{eq--indexomega} was computed in \cite{Atiyah:1974} Theorem 8.1 and is given by
    \begin{equation}\label{eq--laurent}
        \ind\theta^{\pm}_2[\sigma(\bar{\partial})]=\prod_{i=1}^2\left(\frac{1}{1-t^{-\alpha^{(l)}_i}}\right)^\pm
    \end{equation}
    with $(\cdot)^\pm$ denoting the Laurent expansion around $t=0$ and $t=\infty$, respectively\footnote{The process of expanding around $t=0$ or $t=\infty$ is commonly referred to as regularization.}. Finally, assembling all individual contributions from above, the complex \eqref{eq--complex} has
    \begin{equation}\label{eq--eqindex}
        \iind\eth=-\sum_{l\in I}\left(1+\prod_{i=1}^2t^{-\alpha^{(l)}_i}\right)\prod_{k=1}^2\left(\frac{1}{1-t^{-\alpha^{(l)}_k}}\right)^{s_l}.
    \end{equation}
    The full equivariant index, taking into account the gauge-part, is obtained as \eqref{eq--fullindex}. This is our main result. To summarize, by noticing that \autoref{as--main} holds for the complex \eqref{eq--complex} obtained from localization, we are able to globally push the original symbol $\sigma$ off the zero-section outside of the fixed points using $\theta_2$, thereby reducing its support to $Y$. We then employ the filtration with respect to the group action to break down the index of the symbol in simpler pieces that we can finally evaluate explicitly.
    
    \begin{remark}
        (Initial choice of deformation map) We found in \autoref{sec--2} an ambiguity in the construction of $\theta_2$ arising from the initial choice of deformation map for the first fixed point. However, both choices, $f^+$ and $f^-$, lead to the same index which follows immediately from the fact that the resulting deformed symbols are both homotopic to the original one (see \autoref{rem--homotopy}). Hence, their index must agree. In the examples considered below, we show explicitly that the ambiguity gets resolved on the level of the index.
    \end{remark}
    
    \begin{remark}
        (Non-trivial gauge backgrounds) In cases where the complex arises from the expansion around topologically non-trivial connections such that $c_1\neq0$, i.e. there is flux on $X$, the index is no longer given simply by \eqref{eq--fullindex}. This is because there seems to be no canonical way of defining an $H$-action on the gauge bundle. It is therefore appealing to only consider $H$-equivariant bundles which was proposed in the five-dimensional setup \cite{arXiv:1904.12782} by introducing equivariant curvature and fluxes. It was shown that this leads to a shift of the Coulomb parameter $a_0\mapsto a_0+k_i(\epsilon_1^i,\epsilon_2^i)$, where $k_i$ is a function of the flux and isometry parameters $\epsilon_1,\epsilon_2$ at each fixed point $i$. Consequently, we would have to replace \eqref{eq--fullindex} by a sum over the fixed points of the individual contributions in \eqref{eq--eqindex} multiplied by $\chi_\mathrm{Ad}(a_0+k_i(\epsilon_1^i,\epsilon_2^i))$. However, it is an open problem at this point to formulate the index for non-zero flux for arbitrary $X$ and SD/ASD distributions.
        
        In any case, the isometry part at each fixed point can still be obtained as the summands in \eqref{eq--eqindex}, i.e. the general procedure (and, in particular, the regularization) can be extended to the non-zero flux case.
    \end{remark}
    
    \begin{remark}
        (Hypermultiplet) Although not demonstrated explicitly here, the same rule for the choice of deformation maps should hold for the hypermultiplet symbol, i.e. choose $f^+$ at SD fixed points and $f^-$ at ASD ones (or vice versa). For the hypermultiplet, the complex at the fixed points roughly equates to the one of a chiral/anti-chiral Dirac operator (cf. \cite{Festuccia:2020yff}). Hence, upon applying appropriate transition maps (similar to \eqref{eq--transition}), the two symbols, once extended to large enough open subsets, should be related in terms of \eqref{eq--cc} on the intersection. On $S^4$ this was shown in \cite{arXiv:0712.2824,Hama:2012bg}.
    \end{remark}

    Hence, when working with localization on an $\mathcal{N}=2$ theory over a simply-connected compact four-manifold $X$, knowledge of the $H=T^2$-action (i.e. the Killing vector field) and the distribution of SD/ASD over its fixed points immediately gives the index as \eqref{eq--eqindex} and thereby the one-loop partition function. In particular, we showed that the index can be computed from the local (elliptic) contributions around the fixed points for an arbitrary assignment of SD/ASD and these local contributions are combined by applying the correct Laurent expansions as determined in \autoref{sec--2}.

\section{Examples}\label{sec--4}
    In this section we apply our result \eqref{eq--eqindex} for the index to various examples of four-manifolds with $H=T^2$-action, with different distributions of SD/ASD complexes at the fixed points of the $T^2$-action. For convenience, we state here the Laurent expansions used in \eqref{eq--eqindex} explicitly:
    \begin{equation}
    \begin{aligned}\label{eq--laurent.t-1}
        \left(\frac{1}{1-t^{-\alpha_i}}\right)^+&=-\sum_{n=1}^\infty t^{n\alpha_i}=-t^{\alpha_i}-t^{2\alpha_i}-\dots\\
        \left(\frac{1}{1-t^{-\alpha_i}}\right)^-&=\sum_{n=0}^\infty t^{-n\alpha_i}=1+t^{-\alpha_i}+t^{-2\alpha_i}+\dots\\
        \left(\frac{1}{1-t^{\alpha_i}}\right)^+&=\sum_{n=0}^\infty t^{n\alpha_i}=1+t^{\alpha_i}+t^{2\alpha_i}+\dots\\
        \left(\frac{1}{1-t^{\alpha_i}}\right)^-&=-\sum_{n=1}^\infty t^{-n\alpha_i}=-t^{-\alpha_i}-t^{-2\alpha_i}-\dots
    \end{aligned}
    \end{equation}
    Most of the examples considered below have been presented in \cite{arXiv:1812.06473,arXiv:1904.12782} where they are obtained from  five-dimensional considerations. Our results, obtained in a purely four-dimensional way, can be matched exactly by applying the following dictionary between expansions:
    \begin{equation}\begin{split}\label{eq--laurent.conv}
        &\left(\frac{1}{1-t^{-\alpha_i}}\right)^+\longleftrightarrow\left[\frac{1}{1-t^{-\alpha_i}}\right]^-,\\[.6em]
        &\left(\frac{1}{1-t^{-\alpha_i}}\right)^-\longleftrightarrow\left[\frac{1}{1-t^{-\alpha_i}}\right]^+,
    \end{split}\end{equation}
    where $[\cdot]^\pm$ denotes the expansions used in \cite{arXiv:1812.06473}. Some examples presented below cannot be obtained from five dimensions and are new results. At the end of this section we show how to compute from the index the perturbative part of the SYM partition function (in the zero-flux sector) on $\mathbb{F}^1$.

    \subsection{Sphere $S^4$} 
    
        We describe $S^4$ as the quaternion projective space $\mathbb{HP}^1$ with elements $[q_1,q_2]\sim[q_1q,q_2q]$ for $q\in\mathbb{H}^\times$ and introduce local inhomogeneous coordinates $q=q_1q_2^{-1}=z_1+jz_2$ on the northern patch, where $z_1,z_2\in\mathbb{C}$, and $q^{-1}=(\bar{z}_1-jz_2)/|q|^2$ on the southern patch. This choice gives local complex coordinates $(z_1,z_2)$ on the northern patch and $(\bar{z}_1,-z_2)$ on the southern patch. $T^2$ acts by left-multiplication, $q_1\mapsto t_1q_1$ and $q_2\mapsto t_2q_2$, which yields $z_1\mapsto t_1t_2^{-1}z_1$ and $z_2\mapsto t_1^{-1}t_2^{-1}z_2$. From this action we can read off
        \begin{equation*}
        (\alpha_{ij})=\begin{pmatrix}
        1 & -1 \\
        -1 & -1 
        \end{pmatrix}.
        \end{equation*}
        Note that the action of $T^2$ on $\bar{z}_1$ is instead $\bar{z}_1\mapsto t_1^{-1}t_2\bar{z}_1$ and $\alpha_{ij}$ changes correspondingly.
        
        We consider two complexes $(E_1^\bullet,\eth_1)$ and $(E_2^\bullet,\eth_2)$ over $S^4$ given by \eqref{eq--complexH} for different SD/ASD distributions. In $(E_1^\bullet,\eth_1)$ we place SD complexes at both poles whereas for $(E_2^\bullet,\eth_2)$ we place a SD complex at the north pole and a ASD complex at the south pole. On the southern patch, for $(E_2^\bullet,\eth_2)$, we employ \autoref{prop--iso2} and consider the isomorphic SD complex. The isomorphism is induced by the map $\bar{z}_1\mapsto z_1$, $-z_2\mapsto -z_2$ which implies a flip of the weight of $z_1$ on the southern patch. Note that the weights which flip in this way are highlighted in boldface in all the examples below. The SD/ASD distribution and weights around the fixed points for the two complexes can be conveniently displayed by the ``Delzant polygon'' of $S^4$:
        \begin{center}
            \begin{tikzpicture}[scale=.9]
                \def\y{1.25}
                \def\x{.8}
                \def\s{7}
                \def\f{20}
                \draw[thick] plot [smooth,tension=1] coordinates {(0,0) (\x,\y) (0,2*\y)};
                \draw[thick] plot [smooth,tension=1] coordinates {(0,0) (-\x,\y) (0,2*\y)};
                \draw[thick] plot [smooth,tension=1] coordinates {(\s,0) (\s+\x,\y) (\s,2*\y)};
                \draw[thick] plot [smooth,tension=1] coordinates {(\s,0) (-\x+
                \s,\y) (\s,2*\y)};
                % --- first figure ---
                \node[fill=gray!\f,circle,inner sep=0pt,minimum size=4mm] at (0,0) [below]{$+$};
                \node[fill=gray!\f,circle,inner sep=0pt,minimum size=4mm] at (0,2*\y) [above]{$+$};
                \node at (-.2,.3*\y) [below left]{$t_1^{-1}t_2$};
                \node at (.3,.3*\y) [below right]{$t_1^{-1}t_2^{-1}$};
                \node at (-.2,2*\y-.3*\y) [above left]{$t_1t_2^{-1}$};
                \node at (.3,2*\y-.3*\y) [above right]{$t_1^{-1}t_2^{-1}$};
                % --- second figure ---
                \node[fill=gray!\f,circle,inner sep=0pt,minimum size=4mm] at (\s,0) [below]{$-$};
                \node[fill=gray!\f,circle,inner sep=0pt,minimum size=4mm] at (\s,2*\y) [above]{$+$};
                \node at (-.2+\s,.3*\y) [below left]{$\boldsymbol{t_1t_2^{-1}}$};
                \node at (.3+\s,.3*\y) [below right]{$t_1^{-1}t_2^{-1}$};
                \node at (-.2+\s,2*\y-.3*\y) [above left]{$t_1t_2^{-1}$};
                \node at (.3+\s,2*\y-.3*\y) [above right]{$t_1^{-1}t_2^{-1}$};
            \end{tikzpicture}
        \end{center}
        The choice of deformation maps follows from \autoref{cl--claim} and is completely determined by the distribution of SD/ASD complexes. Hence it is ``$+$'' at both poles for $(E_1^\bullet,\eth_1)$ and ``$+$'' at the north and ``$-$'' at the south pole for $(E_2^\bullet,\eth_2)$.
        
        Applying \eqref{eq--eqindex} yields the index of the two complexes:
        \begin{equation}
        \begin{aligned}\label{eq--asdindex}
            \iind\eth_1=&\,-(1+t_2^{-2})\left(\frac{1}{1-t_1t_2^{-1}}\right)^+\left(\frac{1}{1-t_1^{-1}t_2^{-1}}\right)^+\\& -(1+t_1^{-2}) \left(\frac{1}{1-t_1^{-1}t_2}\right)^+\left(\frac{1}{1-t_1^{-1}t_2^{-1}}\right)^+,
        \end{aligned}
        \end{equation}
        \vspace{.4em}
        \begin{equation}\label{eq--flipindex}
        \begin{aligned}
            \iind\eth_2=&\,-(1+t_2^{-2})\left(\frac{1}{1-t_1 t_2^{-1}}\right)^+\left(\frac{1}{1-t_1^{-1}t_2^{-1}}\right)^+\\& -(1+\boldsymbol{t_2^{-2}}) \left(\frac{1}{1-\boldsymbol{t_1t_2^{-1}}}\right)^-\left(\frac{1}{1-t_1^{-1}t_2^{-1}}\right)^-.
        \end{aligned}
        \end{equation}
        Since $(E^\bullet_1,\eth_1)$ is everywhere a SD complex, in particular, it is elliptic. Hence, we expect the index to be an element of $R(H)$. This is confirmed by our computation:
        \begin{equation}
            \iind\eth_1=-1.
        \end{equation}
        The complex $(E^\bullet_2,\eth_2)$ on the contrary is transversally elliptic, hence its equivariant index will be an infinite power series in $(t_1 t_2^{-1})$ and $(t_1^{-1}t_2^{-1})$ with each term appearing with a finite multiplicity. In \autoref{figure1} we plot the exponents $n_1$ and $n_2$ appearing for $(t_1t_2^{-1})$ and $(t_1^{-1}t_2^{-1})$, respectively and the corresponding multiplicity for each term.
        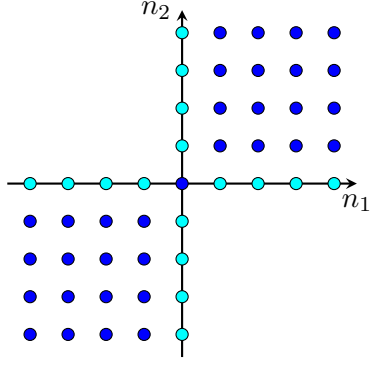
\begin{figure}[h!]
        \centering
        \begin{tikzpicture}
            \def\rad{.08cm};
            \draw[thick,-stealth] (-2.3,0)--(2.3,0) node [below]{$n_1$};
            \draw[thick,-stealth] (0,-2.3)--(0,2.3) node [left]{$n_2$};
            \foreach \x in {.5,1,...,2}{
                \foreach \y in {.5,1,...,2}{
                    \draw[fill=blue] (\x,\y) circle [radius=\rad];
                    \draw[fill=blue] (-\x,-\y) circle [radius=\rad];
                }
            };
            \foreach \n in {.5,1,...,2}{
                \draw[fill=light] (0,\n) circle [radius=\rad];
                \draw[fill=light] (\n,0) circle [radius=\rad];
                \draw[fill=light] (0,-\n) circle [radius=\rad];
                \draw[fill=light] (-\n,0) circle [radius=\rad];
            }
            \draw[fill=blue] (0,0) circle [radius=\rad];
        \end{tikzpicture}
        \caption{We show the exponents of the weights in \eqref{eq--flipindex} for $S^4$. Light blue points have multiplicity one, blue points have multiplicity two.}
        \label{figure1}
        \end{figure}
       
        In order to exemplify the comparison with \cite{arXiv:1812.06473}, we consider the north pole contribution of $\iind\eth_1$. The corresponding contribution in \cite{arXiv:1812.06473} is given by:
        \begin{equation}
            \left[\frac{1}{1-t_1t_2^{-1}}\right]^+\left[\frac{1}{1-t_1^{-1}t_2^{-1}}\right]^+ +\left[\frac{1}{1-t_1^{-1}t_2}\right]^-\left[\frac{1}{1-t_1 t_2}\right]^-.
        \end{equation}
        Upon applying \eqref{eq--laurent.conv} we find:
        \begin{equation}\begin{split}
            &\left(\frac{1}{1-t_1t_2^{-1}}\right)^+\left(\frac{1}{1-t_1^{-1}t_2^{-1}}\right)^+ +\left(\frac{1}{1-t_1^{-1}t_2}\right)^+\left(\frac{1}{1-t_1 t_2}\right)^+\\
            =&\left(\frac{1}{1-t_1t_2^{-1}}\right)^+\left(\frac{1}{1-t_1^{-1}t_2^{-1}}\right)^+ +\left(\frac{t_1t_2^{-1}}{1-t_1t_2^{-1}}\right)^+\left(\frac{t_1^{-1}t_2^{-1}}{1-t_1^{-1}t_2^{-1}}\right)^+\\
            =&(1+t_2^{-2})\left(\frac{1}{1-t_1t_2^{-1}}\right)^+\left(\frac{1}{1-t_1^{-1}t_2^{-1}}\right)^+
        \end{split}\end{equation}
        which matches \eqref{eq--asdindex}. All other contributions can be matched in the same way. 
        
        In \autoref{sec--3} we commented on how the ambiguity for the initial choice of deformation map (here, the choice of ``$+$'' or ``$-$'' at the north pole, from which ``$-$'' or ``$+$'' then follows for the south pole) gets resolved on the level of the index. We can now check this explicitly by taking the other choice for the deformation map at the north pole which
        % than that in \eqref{eq--asdindex}, \eqref{eq--flipindex} 
        yields:
        \begin{equation}
        \begin{aligned}
            \iind\eth_1=&-(1+t_2^{-2})\left(\frac{1}{1-t_1t_2^{-1}}\right)^-\left(\frac{1}{1-t_1^{-1}t_2^{-1}}\right)^-\\& -(1+t_1^{-2}) \left(\frac{1}{1-t_1^{-1}t_2}\right)^-\left(\frac{1}{1-t_1^{-1}t_2^{-1}}\right)^-,
        \end{aligned}
        \end{equation}
        \vspace{.4em}
        \begin{equation}
        \begin{aligned}
            \iind\eth_2=&-(1+t_2^{-2})\left(\frac{1}{1-t_1t_2^{-1}}\right)^-\left(\frac{1}{1-t_1^{-1}t_2^{-1}}\right)^-\\& -(1+\boldsymbol{t_2^{-2}}) \left(\frac{1}{1-\boldsymbol{t_1t_2^{-1}}}\right)^+\left(\frac{1}{1-t_1^{-1}t_2^{-1}}\right)^+.
        \end{aligned}
        \end{equation}  
        One can compare the contribution from each fixed point to that in \eqref{eq--asdindex}, \eqref{eq--flipindex}. The effect of changing the initial choice of deformation map, for each fixed point contribution, is to reverse the exponents in the power series: 
        \begin{equation}\label{eq--symmetry}
            (n_1,n_2)\longmapsto (-n_1,-n_2).
        \end{equation}
        However, it is apparent from \autoref{figure1} that the index is centrally symmetric in the $(n_1,n_2)$-plane and therefore it is independent of the initial choice of deformation map. We will see below that this point symmetry around the origin is also present for the other examples, hence, there is no ambiguity on the level of the index.
        
    \subsection{Complex Projective Space $\mathbb{CP}^2$}
    
        On $\mathbb{CP}^2$ there is a $T^2$-action with three fixed points present. We consider patches $U_i=\{[z_1,z_2,z_3]|z_i\neq0\}$ with $i=1,2,3$ around each fixed point and introduce on $U_1$ inhomogeneous coordinates $(z_2/z_1,z_3/z_1)$, on whom $T^2$ acts as $z_2/z_1\mapsto t_1 z_2/z_1$, $z_3/z_1\mapsto t_2z_3/z_1$. This produces
        \begin{equation*}
        (\alpha_{ij})=\begin{pmatrix}
        1 & \phantom{.}0 \\
        0 & \phantom{.}1 
        \end{pmatrix}.
        \end{equation*}
        Similarly we cover $U_2,U_3$ with inhomogeneous coordinates $(z_1/z_2,z_3/z_2)$ and $(z_2/z_3,z_1/z_3)$ and obtain the respective $\alpha_{ij}$. We consider two complexes $(E_1^\bullet,\eth_1)$ and $(E_2^\bullet,\eth_2)$ whose distribution of SD/ASD and the weights of the $T^2$-action at the fixed points are given as follows\footnote{Note that rotating the $+/-$ distribution does not affect the final result for the index. This can be viewed as just a relabeling of the patches.}:     
        \begin{center}
            \begin{tikzpicture}[scale=.9]
                \def\x{2}
                \def\y{2}
                \def\s{5}
                \def\f{20}
                \draw[thick] (0,0) node[fill=gray!\f,circle,inner sep=0pt,minimum size=4mm] [below left]{$+$} --(0,\y) node[fill=gray!\f,circle,inner sep=0pt,minimum size=4mm] [above left]{$+$} --(\x,0) node[fill=gray!\f,circle,inner sep=0pt,minimum size=4mm] [below right]{$+$} --cycle;
                \draw[thick] (\s+\x,0) node[fill=gray!\f,circle,inner sep=0pt,minimum size=4mm] [below left]{$+$} --(\s+\x,\y) node[fill=gray!\f,circle,inner sep=0pt,minimum size=4mm] [above left]{$-$} --(\s+\x+\x,0) node[fill=gray!\f,circle,inner sep=0pt,minimum size=4mm] [below right]{$-$} --cycle;
                % --- first figure ---
                \node at (0,0) [below right]{$t_1$};
                \node at (0,0) [above left]{$t_2$};
                \node at (0,\y) [below left]{$t_2^{-1}$};
                \node at (0,\y) [above right]{$t_1t_2^{-1}$};
                \node at (\x,0) [below left]{$t_1^{-1}$};
                \node at (\x,0) [above right]{$t_1^{-1}t_2$};
                % --- second figure ---
                \node at (\s+\x,0) [below right]{$t_1$};
                \node at (\s+\x,0) [above left]{$t_2$};
                \node at (\s+\x,\y) [below left]{$\boldsymbol{t_2}$};
                \node at (\s+\x,\y) [above right]{$t_1t_2^{-1}$};
                \node at (\s+\x+\x,0) [below left]{$\boldsymbol{t_1}$};
                \node at (\s+\x+\x,0) [above right]{$t_1^{-1}t_2$};
            \end{tikzpicture}  
        \end{center}
        In the figure on the right, $t_1$ and $t_2$ flip at the minus fixed points due to the use of \autoref{prop--iso2}. We apply \eqref{eq--eqindex} to obtain the index of the complexes:
        \begin{equation}
        \begin{aligned}
            \iind\eth_1=&-(1+t_1 t_2)\left(\frac{1}{1-t_1}\right)^+\left(\frac{1}{1-t_2}\right)^+ -(1+t_1^{-2}t_2) \left(\frac{1}{1-t_1^{-1}}\right)^+\left(\frac{1}{1-t_1^{-1}t_2}\right)^+\\ &-(1+t_1t_2^{-2})\left(\frac{1}{1-t_1t_2^{-1}}\right)^+\left(\frac{1}{1-t_2^{-1}}\right)^+,
        \end{aligned}
        \end{equation}
        \vspace{.4em}
        \begin{equation}\label{eq--CP2flip}
        \begin{aligned}
            \iind\eth_2=&-(1+t_1t_2)\left(\frac{1}{1-t_1}\right)^+\left(\frac{1}{1-t_2}\right)^+ -(1+\boldsymbol{t_2}) \left(\frac{1}{1-\boldsymbol{t_1}}\right)^-\left(\frac{1}{1-t_1^{-1}t_2}\right)^-\\ &-(1+\boldsymbol{t_1})\left(\frac{1}{1-t_1t_2^{-1}}\right)^-\left(\frac{1}{1-\boldsymbol{t_2}}\right)^-.
        \end{aligned}
        \end{equation}
        The first complex is again associated to an elliptic differential operator on $\mathbb{CP}^2$ and therefore an element in $R(H)$:
        \begin{equation}
            \iind\eth_1=-2.
        \end{equation}
        The complex $(E^\bullet_2,\eth_2)$ is transversally elliptic and thus it is a power series in $t_1$ and $t_2$, with finite multiplicities. The exponents and multiplicities are displayed in the $(n_1,n_2)$-plane in \autoref{figure2}. The resulting plot seems identical to the one on  $S^4$ in \autoref{figure1}, however, note that the weights are different for both cases, hence, the index is too.
        \begin{figure}[h!]
        \centering
        \begin{tikzpicture}
            \def\rad{.08cm};
            \draw[thick,-stealth] (-2.3,0)--(2.3,0) node [below]{$n_1$};
            \draw[thick,-stealth] (0,-2.3)--(0,2.3) node [left]{$n_2$};
            \foreach \x in {.5,1,...,2}{
                \foreach \y in {.5,1,...,2}{
                    \draw[fill=blue] (\x,\y) circle [radius=\rad];
                    \draw[fill=blue] (-\x,-\y) circle [radius=\rad];
                }
            };
            \foreach \n in {.5,1,...,2}{
                \draw[fill=light] (0,\n) circle [radius=\rad];
                \draw[fill=light] (\n,0) circle [radius=\rad];
                \draw[fill=light] (0,-\n) circle [radius=\rad];
                \draw[fill=light] (-\n,0) circle [radius=\rad];
            }
            \draw[fill=blue] (0,0) circle [radius=\rad];
        \end{tikzpicture}
        \caption{We show the exponents of the weights in \eqref{eq--CP2flip} for $\mathbb{CP}^2$. Light blue points have multiplicity one, blue points have multiplicity two.}
        \label{figure2}
        \end{figure}
        The two complexes considered above correspond to the SD and flip' cases in \cite{arXiv:1812.06473} and the index can again be seen to match those results (upon applying \eqref{eq--laurent.conv}). Finally, also in this case $\iind\eth_2$ is symmetric under a reflection $(n_1,n_2)\mapsto(-n_1,-n_2)$ and the final result is independent of the initial choice of deformation map.

    \subsection{Hirzebruch Surface $\mathbb{F}^1$}
    
        As a last example we consider the Hirzebruch surface $\mathbb{F}_1=\{(z_1,z_2;u_1,u_2)\}/\sim$ with $(z_1,z_2),(u_1,u_2)\in\mathbb{C}^2\backslash\{0\}$ and
        \begin{equation*}
            (z_1,z_2;u_1,u_2)\sim(z_1^\prime,z_2^\prime;u_1^\prime,u_2^\prime):\Longleftrightarrow\exists\lambda,\mu\in\mathbb{C}^\times:(z_1^\prime,z_2^\prime;u_1^\prime,u_2^\prime)=(\lambda z_1,\lambda z_2;\lambda\mu u_1,\mu u_2).
        \end{equation*}
        Here we have a $T^2$-action $z_1\mapsto t_1z_1$, $u_1\mapsto t_2u_1$ with four fixed points. The four patches are covered by the usual choice of inhomogeneous coordinates.
        % On the patch around the first fixed point we use local coordinates $(z_1,u_1)$ on which $T^2$ acts as $z_1\mapsto t_1z_1$, $u_1\mapsto t_2u_1$. Among all possible distributions of SD/ASD complexes, 
        We consider three complexes $(E_1^\bullet,\eth_1)$,  $(E_2^\bullet,\eth_2)$ and $(E_3^\bullet,\eth_3)$ whose SD/ASD distribution and weights are given, respectively, by
        \begin{center}
            \begin{tikzpicture}[scale=.9]
                \def\x{1.5}
                \def\y{1.5}
                \def\dif{1.5}
                \def\s{6}
                \def\f{20}
                \draw[thick] (0,0) node[fill=gray!\f,circle,inner sep=0pt,minimum size=4mm] [below left]{$+$} --(0,\y) node[fill=gray!\f,circle,inner sep=0pt,minimum size=4mm] [above left]{$+$} --(\x,\y+\dif) node[fill=gray!\f,circle,inner sep=0pt,minimum size=4mm] [above right]{$+$} --(\x,0) node[fill=gray!\f,circle,inner sep=0pt,minimum size=4mm] [below right]{$+$} --cycle;
                \draw[thick] (\s,0) node[fill=gray!\f,circle,inner sep=0pt,minimum size=4mm] [below left]{$+$} --(\s,\y) node[fill=gray!\f,circle,inner sep=0pt,minimum size=4mm] [above left]{$+$} --(\s+\x,\y+\dif) node[fill=gray!\f,circle,inner sep=0pt,minimum size=4mm] [above right]{$-$} --(\s+\x,0) node[fill=gray!\f,circle,inner sep=0pt,minimum size=4mm] [below right]{$-$} --cycle;
                \draw[thick] (2*\s,0) node[fill=gray!\f,circle,inner sep=0pt,minimum size=4mm] [below left]{$+$} --(2*\s,\y) node[fill=gray!\f,circle,inner sep=0pt,minimum size=4mm] [above left]{$-$} --(2*\s+\x,\y+\dif) node[fill=gray!\f,circle,inner sep=0pt,minimum size=4mm] [above right]{$+$} --(2*\s+\x,0) node[fill=gray!\f,circle,inner sep=0pt,minimum size=4mm] [below right]{$-$} --cycle;
                % --- first figure ----
                \node at (0,0) [below right]{$t_1$};
                \node at (0,0) [above left]{$t_2$};
                \node at (\x,0) [below left]{$t_1^{-1}$};
                \node at (\x,0) [above right]{$t_2$};
                \node at (\x,\y+\dif) [below right]{$t_2^{-1}$};
                \node at (\x,\y+\dif) [above left]{$t_1^{-1}t_2^{-1}$};
                \node at (0,\y) [below left]{$t_2^{-1}$};
                \node at (.2,\y+.7) {$t_1t_2$};
                % --- second figure ---
                \node at (\s,0) [below right]{$t_1$};
                \node at (\s,0) [above left]{$t_2$};
                \node at (\s+\x,0) [below left]{$\boldsymbol{t_1}$};
                \node at (\s+\x,0) [above right]{$t_2$};
                \node at (\s+\x,\y+\dif) [below right]{$t_2^{-1}$};
                \node at (\s+\x,\y+\dif) [above left]{$\boldsymbol{t_1t_2}$};
                \node at (\s,\y) [below left]{$t_2^{-1}$};
                \node at (\s+.2,\y+.7) {$t_1t_2$};
                % --- third figure ---
                \node at (2*\s,0) [below right]{$t_1$};
                \node at (2*\s,0) [above left]{$t_2$};
                \node at (2*\s+\x,0) [below left]{$\boldsymbol{t_1}$};
                \node at (2*\s+\x,0) [above right]{$t_2$};
                \node at (2*\s+\x,\y+\dif) [below right]{$\boldsymbol{t_2}$};
                \node at (2*\s+\x,\y+\dif) [above left]{$\boldsymbol{t_1t_2}$};
                \node at (2*\s,\y) [below left]{$\boldsymbol{t_2}$};
                \node at (2*\s+.2,\y+.7) {$t_1t_2$};
            \end{tikzpicture}
        \end{center}
        Applying the index formula \eqref{eq--eqindex} yields:
        \begin{equation}
        \begin{aligned}
            \iind\eth_1=&-(1+t_1t_2)\left(\frac{1}{1-t_1}\right)^+\left(\frac{1}{1-t_2}\right)^+ -(1+t_1^{-1}t_2) \left(\frac{1}{1-t_2}\right)^+\left(\frac{1}{1-t_1^{-1}}\right)^+\\ &-(1+t_1^{-1}t_2^{-2})\left(\frac{1}{1-t_1^{-1}t_2^{-1}}\right)^+\left(\frac{1}{1-t_2^{-1}}\right)^+\\& -(1+t_1)\left(\frac{1}{1-t_2^{-1}}\right)^+\left(\frac{1}{1-t_1t_2}\right)^+,
        \end{aligned}
        \end{equation}
        \vspace{.4em}
        \begin{equation}\label{eq--F1flip}
        \begin{aligned}
            \iind\eth_2=&-(1+t_1t_2)\left(\frac{1}{1-t_1}\right)^+\left(\frac{1}{1-t_2}\right)^+ -(1+\boldsymbol{t_1t_2}) \left(\frac{1}{1-t_2}\right)^-\left(\frac{1}{1-\boldsymbol{t_1}}\right)^-\\ &-(1+\boldsymbol{t_1})\left(\frac{1}{1-\boldsymbol{t_1t_2}}\right)^-\left(\frac{1}{1-t_2^{-1}}\right)^-\\
            &-(1+t_1)\left(\frac{1}{1-t_2^{-1}}\right)^+\left(\frac{1}{1-t_1t_2}\right)^+,
        \end{aligned}
        \end{equation}
        \vspace{.4em}
        \begin{equation}\label{eq--F1flip'}
        \begin{aligned}
            \iind\eth_3=&-(1+t_1t_2)\left(\frac{1}{1-t_1}\right)^+\left(\frac{1}{1-t_2}\right)^+ -(1+\boldsymbol{t_1t_2}) \left(\frac{1}{1-t_2}\right)^-\left(\frac{1}{1-\boldsymbol{t_1}}\right)^-\\ &-(1+\boldsymbol{t_1t_2^2})\left(\frac{1}{1-\boldsymbol{t_1t_2}}\right)^+\left(\frac{1}{1-\boldsymbol{t_2}}\right)^+\\
            &-(1+\boldsymbol{t_1t_2^2})\left(\frac{1}{1-\boldsymbol{t_2}}\right)^-\left(\frac{1}{1-t_1t_2}\right)^-.
        \end{aligned}
        \end{equation}
        As expected, for the topologically twisted theory we find an element in $R(H)$:
        \begin{equation}
        \iind\eth_1=-2.
        \end{equation}
        Similar to the previous examples the exponents and multiplicities of the weights are displayed in \autoref{figure3} for the transversally elliptic complexes $(E^\bullet_2,\eth_2)$ and $(E^\bullet_3,\eth_3)$.
        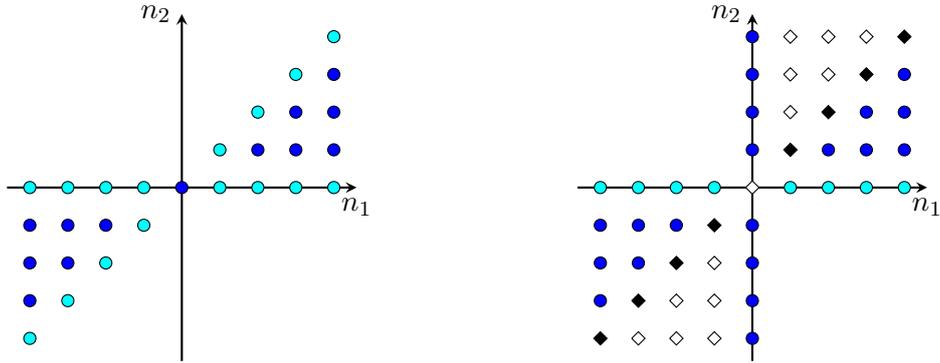
\begin{figure}[h!]
        \centering
        \begin{tikzpicture}
            \def\rad{.08cm};
            \draw[thick,-stealth] (-2.3,0)--(2.3,0) node [below]{$n_1$};
            \draw[thick,-stealth] (0,-2.3)--(0,2.3) node [left]{$n_2$};
            \foreach \x in {1,...,3}{
                \foreach \y in {1,...,\x}{
                    \draw[fill=blue] (\x/2+1/2,\y/2) circle [radius=\rad];
                    \draw[fill=blue] (-\x/2-1/2,-\y/2) circle [radius=\rad];
                }
            };
            \foreach \n in {.5,1,...,2}{
                \draw[fill=light] (\n,\n) circle [radius=\rad];
                \draw[fill=light] (\n,0) circle [radius=\rad];
                \draw[fill=light] (-\n,-\n) circle [radius=\rad];
                \draw[fill=light] (-\n,0) circle [radius=\rad];
            }
            \draw[fill=blue] (0,0) circle [radius=\rad];
        \end{tikzpicture}
        \hspace{6em}
        \begin{tikzpicture}
            \def\rad{.08cm};
            \draw[thick,-stealth] (-2.3,0)--(2.3,0) node [below]{$n_1$};
            \draw[thick,-stealth] (0,-2.3)--(0,2.3) node [left]{$n_2$};
            \foreach \x in {1,...,3}{
                \foreach \y in {1,...,\x}{
                    \draw[fill=blue] (\x/2+1/2,\y/2) circle [radius=\rad];
                    \draw[fill=blue] (-\x/2-1/2,-\y/2) circle [radius=\rad];
                    \draw[fill=white] (\y/2,\x/2+1/2) ++(-2.5pt,0 pt) -- ++(2.5pt,2.5pt)--++(2.5pt,-2.5pt)--++(-2.5pt,-2.5pt)--++(-2.5pt,2.5pt);
                    \draw[fill=white] (-\y/2,-\x/2-1/2) ++(-2.5pt,0 pt) -- ++(2.5pt,2.5pt)--++(2.5pt,-2.5pt)--++(-2.5pt,-2.5pt)--++(-2.5pt,2.5pt);
                }
            };
            \foreach \n in {.5,1,...,2}{
                \draw[fill=black] (\n,\n) ++(-2.5pt,0 pt) -- ++(2.5pt,2.5pt)--++(2.5pt,-2.5pt)--++(-2.5pt,-2.5pt)--++(-2.5pt,2.5pt);
                \draw[fill=light] (\n,0) circle [radius=\rad];
                \draw[fill=black] (-\n,-\n) ++(-2.5pt,0 pt) -- ++(2.5pt,2.5pt)--++(2.5pt,-2.5pt)--++(-2.5pt,-2.5pt)--++(-2.5pt,2.5pt);
                \draw[fill=light] (-\n,0) circle [radius=\rad];
                \draw[fill=blue] (0,\n) circle [radius=\rad];
                \draw[fill=blue] (0,-\n) circle [radius=\rad];
            }
            \draw[fill=white] (0,0) ++(-2.5pt,0 pt) -- ++(2.5pt,2.5pt)--++(2.5pt,-2.5pt)--++(-2.5pt,-2.5pt)--++(-2.5pt,2.5pt);
        \end{tikzpicture}
        \caption{We show the exponents of the weights in \eqref{eq--F1flip} on the left and \eqref{eq--F1flip'} on the right. Light blue points have multiplicity one, blue points multiplicity two, black squares multiplicity three and white squares multiplicity four.}
        \label{figure3}
        \end{figure}
        The result for the index of $(E^\bullet_2,\eth_2)$ matches with \cite{arXiv:1812.06473}, upon using \eqref{eq--laurent.conv}. The index of $(E^\bullet_3,\eth_3)$, on the other hand, cannot be obtained in the five-dimensional framework\footnote{In $d=5$ one usually considers only (contact) instantons. According to \cite{arXiv:1904.12782}, reducing along different fibers produces different distributions of instantons/anti-instantons on the four-dimensional manifold. $(E^\bullet_1,\eth_1)$ and $(E^\bullet_2,\eth_2)$ can be obtained in this way whereas $(E^\bullet_3,\eth_3)$ can not. In some sense, one is ``running out of fibers'' along which to reduce.} but is only accessible through our four-dimensional procedure and is therefore a new result. Finally, also in this case $\iind\eth_2$ and $\iind\eth_3$ are symmetric under a reflection $(n_1,n_2)\mapsto(-n_1,-n_2)$, confirming again that the index is independent of the initial choice of deformation map.
        
        Finally, as a physical application, let us compute the superdeterminant and, hence, the one-loop contribution to the partition function for SYM on $\mathbb{F}^1$ for the ``$+-+-$'' distribution corresponding to $(E^\bullet_3,\eth_3)$. It was explained in \autoref{sec--1} that, in addition to the index result \eqref{eq--F1flip'}, one needs to take possible ghost zero-modes into account. The way to do this is to extend the field content by a number of pairs of constant scalar fields $(a_i,\Q a_i)$, $(\bar{a}_i,\Q\bar{a}_i)$ in the adjoint of the gauge group, corresponding to the amount of zero-modes ($\Q a_i$ has ghost number of $c$, $\Q\bar{a}_i$ has ghost number of $\bar{c}$). Therefore, we have $\Phi=(A,\varphi,a_i,\bar{a}_i)$ now. They appear in $V^{(2)}$ as additional terms of the form $c\wedge\star \bar{a}_i+\bar{c}\wedge\star a_i$. Hence, according to \eqref{eq--V2} we can take them to be zero-modes of $D$ (remember we only care about the terms with highest-order derivatives) and simply add this number of zero-modes to \eqref{eq--F1flip'}.
        
        As was the case throughout this work, we limit our considerations to the zero-flux sector (however, on general grounds one expects the full BPS locus to have flux-carrying solutions as well; $\mathbb{F}^1$ has two independent two-cycles).
        Since we expand around the trivial connection (and $\mathbb{F}^1$ has $H^0(\mathbb{F}^1)=\mathbb{R}$), both $c$ and $\bar{c}$ have one zero-mode, so we have to add $+2$ to our index result \eqref{eq--F1flip'}. For the full index, this has to be multiplied by the character of the adjoint representation of the gauge group (cf. \autoref{sec--1}) and we obtain
        \begin{equation}\label{eq--indexfull}
            (2+\iind\eth)\cdot\chi_\mathrm{Ad},
        \end{equation}
        where
        \begin{equation}\label{eq--chiad}
            \chi_{\mathrm{Ad}}=\mbox{rk}G+\sum_{\alpha\in\Delta}\exp(ia_0\cdot\alpha).
        \end{equation}
        Here, the sum is over the roots of the gauge group and $a_0$ is the Coulomb parameter.
        
        In order to compute the superdeterminant and, thus, the perturbative partition function in terms of the index \eqref{eq--indexfull}, we introduce the $\Upsilon$-function (cf. \cite{arXiv:0712.2824,Hama:2012bg,arXiv:1812.06473}):
        \begin{equation}
            \Upsilon^C(x|\epsilon_1,\epsilon_2)=\prod_{(n_1,n_2)\in C\cap\mathbb{Z}^2}\left(\epsilon_1 n_1+\epsilon_2 n_2+x\right)\prod_{(n_1,n_2)\in C^\circ\cap\mathbb{Z}^2}\left(\epsilon_1 n_1+\epsilon_2 n_2+x\right),
        \end{equation}
        where $C$ denotes a rational cone and $C^\circ$ its interior. One can determine the cones contributing to \eqref{eq--F1flip'} from the plot on the right in \autoref{figure3}. We define two cones $C_1$ and $C_2$ as $C_1:=\{(n_1,n_2)\in\mathbb{Z}^2|\;n_1,n_2\geq 0\}$ and  $C_2:=\{(n_1,n_2)\in\mathbb{Z}^2|\;n_2\geq n_1\geq 0\}$, respectively. Upon converting our index result \eqref{eq--indexfull} to the superdeterminant using \eqref{eq--iindD} and \eqref{eq--sdet}, we find, up to factors independent of $a_0$ (which can be absorbed in the normalization):
        \begin{equation}\begin{split}\label{eq--sdetF1flip'}
            \sdet^{1/2}|_{H^\bullet(D)}\R=\prod_{\alpha\in\Delta^+}\frac{1}{(a_0\cdot\alpha)^2}&\Upsilon^{C_1}(a_0\cdot\alpha|\epsilon_1,\epsilon_2)\Upsilon^{C_1}(-a_0\cdot\alpha|\epsilon_1,\epsilon_2)\\
            \cdot&\Upsilon^{C_2}(a_0\cdot\alpha|\epsilon_1,\epsilon_2)\Upsilon^{C_2}(-a_0\cdot\alpha|\epsilon_1,\epsilon_2).
        \end{split}\end{equation}
        The product is taken over positive roots and the factor in the denominator is due to the ghost contribution and cancels with the Vandermonde determinant arising from the integral over $a_0$ restricted to the Cartan subalgebra \cite{Hama:2012bg}.
        
        Comparing this expression with the result for the superdeterminant in the cases considered in \cite{arXiv:1812.06473} (page 42) one notices that \eqref{eq--sdetF1flip'} has ``double the amount'' of $\Upsilon$-functions. This is strictly related to the fact that the ``$+-+-$'' distribution of complexes on $\mathbb{F}_1$ cannot be obtained by reducing from a five-dimensional Sasakian manifold (i.e. the method used in \cite{arXiv:1812.06473,arXiv:1904.12782}) and represents a novel result. Hence, our index computation constitutes the first step towards computing the full partition function (also including flux, classical and non-perturbative part) of the $\mathcal{N}=2$ SYM theory on any compact, simply-connected manifold with arbitrary SD/ASD distributions.
    
\paragraph*{Acknowledgments}
    
    We are grateful to Jian Qiu for many illuminating discussions and would like to thank him and Guido Festuccia for comments on the manuscript. LR acknowledges support by Vetenskapsrådet under grant 2018-05572. RM acknowledges support by Vetenskapsrådet under grant 2018-05572 and the Centre for Interdisciplinary Mathematics at Uppsala University.

\appendix

    \renewcommand\qedsymbol{}

    \section{Isomorphism of Complexes}\label{app--isomorphism}

        In this section we sketch a proof of the two propositions used in order to compute the index in \autoref{sec--3}.

        \newtheorem*{prop--iso1}{Proposition 4.1}
        \begin{prop--iso1}
            {\it For the (complexified) SD complex $(\Omega^\bullet,\mathrm{d}^+)_\mathbb{C}$ on $\mathbb{C}^2$ there is an isomorphism}
            \begin{equation*}
                (\Omega^\bullet,\mathrm{d}^+)_\mathbb{C}\simeq(\Omega^{0,\bullet}\otimes(\mathcal{O}\oplus\Lambda^{2,0}T^\ast\mathbb{C}^2),\bar{\partial}\otimes 1).
            \end{equation*}
        \end{prop--iso1}
        \begin{proof}(Sketch)
            First, note that the SD complex $(\Omega^\bullet,\mathrm{d}^+)$ is isomorphic \cite{Donaldson:1985} to
            \begin{equation}\label{eq--complexASD}
                \begin{tikzcd}[column sep=scriptsize]
                    0\ar[r] & \Omega^{0}\ar[r] & \Omega^{0,1}\oplus\overline{\Omega^{0,1}}\ar[r] & \Omega^{0,2}\oplus\overline{\Omega^{0,2}}\oplus\Omega^{1,1}_{\parallel}\ar[r] & 0
                \end{tikzcd},
            \end{equation}
            where the bar denotes complex conjugation, e.g. $\alpha\in\Omega^{0,2}\oplus\overline{\Omega^{0,2}}$, then
            \begin{equation*}
                \alpha=\phi(z)\mathrm{d}z_1\wedge\mathrm{d}z_2+\overline{\phi(z)}\mathrm{d}\bar{z}_1\wedge\mathrm{d}\bar{z}_2
            \end{equation*}
            for coordinates $(z_1,z_2)\in\mathbb{C}^2$ (i.e. $\Omega^{0,2}\oplus\overline{\Omega^{0,2}}$ is real two-dimensional). $\Omega^{1,1}_{\parallel}$ denotes the real one-dimensional subspace of $\Omega^{1,1}$ along the K{\"a}hler form (which is hermitian). This follows from the fact that the Hodge star\footnote{The Hodge star on a complex manifold of complex dimension $n$ is $\bar{\star}:\Omega^{p,q}\rightarrow\Omega^{n-p,n-q}$ with $\bar{\star}\alpha=\star\bar{\alpha}$ and $\star$ the usual Hodge star extended to the complexification of $\Omega^{p+q}$.} $\bar{\star}$ acts on elements in $\Omega^{0,2},\Omega^{2,0}$ as complex conjugation (hence, $(1-\bar{\star})\Omega^{0,2}\oplus\overline{\Omega^{0,2}}=0$) and leaves the K{\"a}hler form invariant. We can ``unfold'' $\Omega^{1,1}_\parallel$ to level zero and obtain 
            \begin{equation}\label{eq--complexASD2}
                \begin{tikzcd}[column sep=scriptsize]
                    0\ar[r] & \Omega^{0,0}\oplus\overline{\Omega^{0,0}}\ar[r] & \Omega^{0,1}\oplus\overline{\Omega^{0,1}}\ar[r] & \Omega^{0,2}\oplus\overline{\Omega^{0,2}}\ar[r] & 0
                \end{tikzcd}.
            \end{equation}
            The complexification thereof, $(\Omega^\bullet,\mathrm{d}^+)_\mathbb{C}$, is simply given by the sum $(\Omega^{0,\bullet},\bar{\partial})\oplus(\Omega^{\bullet,0},\partial)$ of two Dolbeault complexes. On the other hand, we have
            \begin{equation}
                \Omega^{0,\bullet}\otimes(\mathcal{O}\oplus\Lambda^{2,0}T^\ast)=(\Omega^{0,\bullet}\otimes\mathcal{O})\oplus(\Omega^{0,\bullet}\otimes\Lambda^{2,0}T^\ast).
            \end{equation}
            Because the complexes are over $\mathbb{C}^2$ it is easy to write down explicitly an isomorphism for each summand separately, in such a way that they commute with the coboundary maps, giving rise to an isomorphism of complexes.
        \end{proof}

        \newtheorem*{prop--iso2}{Proposition 4.2}
        \begin{prop--iso2}
            {\it For the SD and ASD complexes, $(\Omega^\bullet,\mathrm{d}^+)$ and $(\Omega^\bullet,\mathrm{d}^-)$, on $\mathbb{C}^2$ there is an isomorphism}
            \begin{equation*}
                (\Omega^\bullet,\mathrm{d}^+)\simeq(\Omega^\bullet,\mathrm{d}^-),
            \end{equation*}
            {\it induced by the diffeomorphism $\mathbb{C}^2\ni(z_1,z_2)\mapsto(\bar{z}_1,z_2)$.}
        \end{prop--iso2}
        \begin{proof}(Sketch)
            We have seen in the (sketch) proof of \autoref{prop--iso1} that the SD complex is isomorphic to \eqref{eq--complexASD}.
            Complementary, the ASD complex is isomorphic \cite{Donaldson:1985} to
            \begin{equation}
                \begin{tikzcd}[column sep=scriptsize]
                    0\ar[r] & \Omega^{0}\ar[r] & \Omega^{0,1}\oplus\overline{\Omega^{0,1}}\ar[r] & \Omega^{1,1}_{\perp}\ar[r] & 0
                \end{tikzcd},
            \end{equation}
            where $\Omega^{1,1}_{\perp}$ denotes the real three-dimensional subspace of $\Omega^{1,1}$ orthogonal to the K{\"a}hler form. 
            Let $(z_1,z_2)$ denote the coordinates on $\mathbb{C}^2$ and consider the smooth map
            \begin{equation*}
                f:\mathbb{C}^2\rightarrow\mathbb{C}^2,\quad(z_1,z_2)\mapsto(\bar{z_1},z_2).
            \end{equation*}
            Clearly, $f$ is a diffeomorphism on $\mathbb{C}^2$ and induces a map $f^\ast$ acting on $\Omega^{1,1}_\perp$ by pullback. It can be verified explicitly by choosing bases for the respective spaces of forms that $f^\ast$ gives the sought-after isomorphism. For example, we have
            \begin{equation*}
                \Omega^{1,1}_\perp\ni\mathrm{i}(\mathrm{d}z_1\wedge\mathrm{d}\bar{z}_1-\mathrm{d}z_2\wedge\mathrm{d}\bar{z}_2)\longmapsto -\mathrm{i}(\mathrm{d}z_1\wedge\mathrm{d}\bar{z}_1+\mathrm{d}z_2\wedge\mathrm{d}\bar{z}_2)\in\Omega^{1,1}_\parallel
            \end{equation*}
            and similarly for the remaining two basis elements. As in the last proposition, it can then be verified by explicit computation that $f^\ast$ commutes with the coboundary maps.
        \end{proof}

    \section{$K$-Theory and the Symbol}\label{app--ktheory}
    
        In this appendix we introduce some basic notions of (topological) equivariant $K$-theory. In particular, we state why the symbol can be considered an element of the $K$-group and why this is relevant for the index computation.
        The exposition follows \cite{Landweber:2005,Segal:1968,Atiyah:1989} closely and we refer the interested reader to those references for a more detailed view on the subject.
        
        Let $X$ be a topological space which is compact and Hausdorff. The basic idea of $K$-theory is to probe topological properties of $X$ by considering complex vector bundles $E\overset{\pi}{\rightarrow}X$ of finite rank over $X$. The set of all such vector bundles is denoted $\vt(X)$. Since we are only interested in topological properties, we only concern ourselves with these vector bundles up to bundle isomorphisms and write $\vt_\simeq(X)$ for the quotient space. This can be made into a semi-group via the Whitney sum $\oplus$ (which descends to the quotient; the class of trivial bundles over $X$ is the identity). The $K$-group of $X$ is obtained by turning this semi-group into a group via the Grothendieck construction:

        \begin{defi}
            ($K$-group of $X$) The $K$-group of $X$ is defined as the quotient $K(X)=(\vt_\simeq(X)\times\vt_\simeq(X))/\sim$, where for all $E_1,E_2,F_1,F_2\in\vt_\simeq(X)$,
            \begin{equation*}
                (E_1,E_2)\sim(F_1,F_2):\Longleftrightarrow \exists G\in\vt_\sim(X): E_1\oplus F_2\oplus G=E_2\oplus F_1\oplus G.
            \end{equation*}
            The group action is given by $(E_1,E_2)\oplus(F_1,F_2)=(E_1\oplus F_1,E_2\oplus F_2)$.
        \end{defi}
        
        Intuitively, one might like to think of the equivalence class $[(E_1,E_2)]$ as the ``difference'' $E_1-E_2$ of the two vector bundles. Note that the $K$-group\footnote{To be more precise, we have defined the group $K^0(X)$, corresponding (under a natural transformation given by the Chern character) to even (rational) cohomology of $X$. There is also a group $K^1(X)$ corresponding to the odd part which, however, we will not be concerned with.} $K(X)$ is even a ring, by virtue of the tensor product $\otimes$ of vector bundles extending to the construction above.

        \begin{eg}\label{ex--Kgroup.point}
            The $K$-group over a point $\{pt\}$ is given by $K(\{pt\})\simeq\mathbb{Z}$. Every vector bundle over $\{pt\}$ is just a vector space. Up to isomorphism, those are classified by their dimension.
        \end{eg}

        We can also define maps between the $K$-groups of different spaces $X,Y$. Consider a continuous function $f:X\rightarrow Y$. Then for any vector bundle $E\in\vt(Y)$, $f$ induces the pullback bundle $f^\ast E\in\vt(X)$. It can be checked that this extends to a ring homomorphism\footnote{Hence, $K(\cdot)$ can be viewed as a contravariant functor from the category of compact topological spaces with continuous maps to the category of commutative unital rings with ring morphisms.} $f^\ast:K(Y)\rightarrow K(X)$. In particular, from \autoref{ex--Kgroup.point} we find $K(X)\rightarrow\mathbb{Z}$ for the inclusion of a point into $X$. 

        The attentive reader might have noticed that, in the main text, we always consider the $K$-group over tangent bundles $TX$, which are not compact, even if $X$ is. However, they are still locally compact and we can define their $K$-group in the following way:

        \begin{defi}
            ($K(X)$ for non-compact $X$) Let $X$ be locally compact. Then its $K$-group is defined by $K(X):=K(X^+)/K(\{pt\})$. Here $X^+$ is the one-point compactification of $X$.
        \end{defi}

        The relation of $K$-theory to the symbol of a complex of differential operators is established through the following

        \begin{thm}\label{thm--atiyah}
            Let $C^n(X)$ denote the set of compactly supported complexes of vector bundles over $X$ of length $n$, up to homotopy. Let $C^n_\emptyset(X)$ be the set of such complexes with empty support. Then, for $n\in\mathbb{N}$:
            \begin{equation*}
                K(X)\simeq C^n(X)/C^n_\emptyset(X).
            \end{equation*}
            $C^\infty(X)$ denotes the direct limit under inclusion $C^n(X)\subset C^{n+1}(X)$.
        \end{thm}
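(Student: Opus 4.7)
The plan is to exhibit the Euler characteristic map
\[
\chi : C^n(X) \to K(X), \qquad [E^\bullet] \mapsto \sum_{i=0}^n (-1)^i [E^i],
\]
and to show it descends to an isomorphism on $C^n(X)/C^n_\emptyset(X)$. For compact $X$ this is essentially a difference-bundle construction; the non-compact case is reduced to the compact one via the one-point compactification $X^+$, using the definition $K(X) = K(X^+)/K(\{pt\})$ from earlier in the appendix.

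Well-definedness of $\chi$ is essentially automatic. Homotopy invariance holds because homotopic vector bundles are isomorphic and hence represent the same $K$-class. Vanishing on $C^n_\emptyset(X)$ follows because an empty-support (i.e.\ everywhere exact) complex decomposes locally into short exact sequences $0 \to \ker d^i \to E^i \to \mathrm{im}\, d^i \to 0$; each such sequence of vector bundles splits, so $[E^i] = [\ker d^i] + [\mathrm{im}\, d^i]$ in $K(X)$, and the alternating sum over $i$ telescopes to zero globally.

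Surjectivity is easy: any class $[E] - [F] \in K(X)$ is realized as $\chi$ of the length-one complex $0 \to F \to E \to 0$. For compact $X$ the zero map suffices; for non-compact $X$ one stabilizes by a trivial bundle and arranges the map to be an isomorphism outside a compact set, so the complex has compact support. Padding with zero bundles embeds this representative into $C^n(X)$ for any $n \geq 1$, and the inclusions $C^n(X) \subset C^{n+1}(X)$ are compatible with $\chi$, making the statement uniform in $n$ and allowing passage to the direct limit $C^\infty(X)$.

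For injectivity, suppose $\chi([E^\bullet]) = 0$ in $K(X)$; then after stabilization there exists a bundle $G$ with $\bigoplus_{i\text{ even}} E^i \oplus G \simeq \bigoplus_{i\text{ odd}} E^i \oplus G$. One then combines this global stable isomorphism with the local splittings provided by the differentials on the complement of the support in order to deform $E^\bullet$ (through a homotopy in $C^n(X)$, possibly after further stabilization which is trivial in the quotient) into a complex that is exact everywhere, i.e.\ an element of $C^n_\emptyset(X)$. The main obstacle is exactly this patching: the global stable isomorphism and the local splittings must be assembled into a continuous family of complexes with compact support. For non-compact $X$, this is handled by extending $E^\bullet$ to $X^+$ via trivial bundles at the added point, applying the compact-case argument there, and descending through the quotient defining $K(X)$.
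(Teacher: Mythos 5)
The paper does not actually prove this theorem: the written ``proof'' is only a citation to Atiyah's \emph{K-Theory} (Theorem 2.6.1, p.~88). So your attempt cannot be compared against an argument in the paper; it can only be measured against the source. Viewed that way, your outline identifies the right overall strategy (a difference-bundle map that descends to the quotient, surjectivity from length-one complexes, injectivity by deforming to a split-exact complex), but two steps as written are not actually proofs.

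First, the map $\chi(E^\bullet)=\sum_i(-1)^i[E^i]$ is only meaningful as a map into $K(X)$ when $X$ is compact. For locally compact $X$ with $K(X)=\widetilde K(X^+)$, an individual bundle $E^i$ over $X$ does \emph{not} determine a class in $K(X)$: one needs a trivialization outside a compact set, and that trivialization is exactly what the differentials of the complex supply outside $\operatorname{supp} E^\bullet$. The correct object is the \emph{difference bundle} $d(E^\bullet)$ built by clutching over $X^+$ using the exactness of the complex at infinity, not the na\"ive alternating sum. Your parenthetical remark that the non-compact case is ``reduced to the compact one via $X^+$'' gestures at this, but the map you wrote down is not that construction, and the remainder of your argument uses it literally.

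Second, your injectivity argument is a restatement of what must be proved rather than a proof. You write that the global stable isomorphism and the local splittings ``must be assembled into a continuous family of complexes with compact support'' and flag this as ``the main obstacle'' --- but you never resolve it. This assembly is precisely the technical heart of the theorem, and in Atiyah's treatment it is made tractable by a preliminary reduction which you never invoke: every compactly supported complex of length $n$ is homotopic, within $C^n(X)$, to one of length one (a single morphism $E^0\to E^1$), after which injectivity reduces to the much more concrete statement that a monomorphism of bundles which is stably an isomorphism is homotopic rel support to an isomorphism. Working with arbitrary-length complexes throughout, as you do, makes the patching problem essentially as hard as the theorem itself. You should either import the length-one reduction or give an explicit inductive degeneration argument; as written the injectivity step is a gap, not a sketch.
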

        \begin{proof}
            See \cite{Atiyah:1989} Theorem 2.6.1, p. 88.
        \end{proof}

        In order to appreciate the theorem above, we have to define the support of a complex:

        \begin{defi}
            (Support of a complex) The support of a complex $E^\bullet$ is the subset $\supp E^\bullet\subset X$ such that for $x\in\supp E^\bullet$, the sequence
            \begin{equation*}
                \begin{tikzcd}[column sep=scriptsize]
                    \dots\ar[r] & E^{n-1}|_{\pi^{-1}(x)}\ar[r] & E^n|_{\pi^{-1}(x)}\ar[r] & E^{n+1}|_{\pi^{-1}(x)}\ar[r] & \dots
                \end{tikzcd}
            \end{equation*}
            is not exact.
        \end{defi}
        
        In words, \autoref{thm--atiyah} allows to add to an existing complex an exact complex ``at no cost''. This is used, for example, when folding a complex. Although the resulting complex obviously differs from the original one, their support is identical and they belong to the same class in $K(X)$. 

        For a complex of differential operators
        \begin{equation}\label{eq--app.diffcomplex}
            \begin{tikzcd}[column sep=scriptsize]
                \dots\ar[r] & \Gamma(E^{n-1})\ar[r,"\mathrm{d}^{n-1}"] & \Gamma(E^n)\ar[r,"\mathrm{d}^n"] & \Gamma(E^{n+1})\ar[r] & \dots
            \end{tikzcd}
        \end{equation}
        with $\Gamma(E^n)$ denoting the space of sections on $E^n$, the corresponding symbol complex $\sigma(\mathrm{d})$ is given by
        \begin{equation}\label{eq--app.symcomplex}
            \begin{tikzcd}[column sep=scriptsize]
                \dots\ar[r] & \pi^\ast E^{n-1}\ar[r,"\sigma^{n-1}"] & \pi^\ast E^n\ar[r,"\sigma^n"] & \pi^\ast E^{n+1}\ar[r] & \dots
            \end{tikzcd}
        \end{equation}
        with $\pi:T^\ast X\rightarrow X$ the cotangent bundle and $\sigma^n$ bundle morphisms over $X$.
        By definition, if the complex \eqref{eq--app.diffcomplex} is elliptic, then it is exact outside of the zero-section $s_0:X\rightarrow T^\ast X,x\mapsto0_{T^\ast_xX}$. But $s_0\simeq X$ and $X$ is compact, thus, we see that $\sigma(\mathrm{d})\in C^\infty(X)$ (or for a definite length $n$ of $\sigma(\mathrm{d})$, $\sigma(\mathrm{d})\in C^n(X)$) and $[\sigma(\mathrm{d})]\in K(X)$. Hence, we can use the power of $K$-theory to analyze the symbol.

        Let us now move on to the equivariant case. Consider a compact Lie group $H$ acting on $X$ on the left via the map $H\times X\rightarrow X,(h,x)\mapsto h\cdot x$ with the usual conditions. This turns $X$ into an $H$-space.
        
        \begin{defi}
            ($H$-vector bundle) A vector bundle $\pi:E\rightarrow X$ over the $H$-space $X$ is called an $H$-vector bundle, if $E$ is an $H$-space such that
            \begin{enumerate}
                \item [(i)] $\pi$ respects the group action, i.e. $\pi\circ h=h\circ\pi$,
                \item [(ii)] the maps $E|_{\pi^{-1}(x)}\rightarrow E|_{\pi^{-1}(h\cdot x)}$ are linear maps for all $h\in H$.
            \end{enumerate}
        \end{defi}

        In complete analogy to the ordinary case, we can define the set $\vt_{\simeq,H}(X)$ of $H$-vector bundles of finite rank over $X$, up to isomorphisms, and apply the Grothendieck construction to get the equivariant $K$-group $K_H(X)$. Note that the Whitney sum and tensor product are defined in the ordinary way, turning $K_H(X)$ into a commutative unital ring.

        \begin{eg}\label{ex--KHgroup.point}
            The equivariant $K$-group over a point $\{pt\}$ is $K_H(\{pt\})\simeq R(H)$. Here, $R(H)$ is the representation ring of $H$, obtained by applying the Grothendieck construction to the semi-group of finite-dimensional complex representation spaces of $H$. This is a ring via the tensor product.
        \end{eg}

        \noindent Similarly to the ordinary case, continuous $H$-maps $f:X\rightarrow Y$ between $H$-spaces $X,Y$ induce homomorphisms $f^\ast:K_H(Y)\rightarrow K_H(X)$.

        Consider again the complex of differential operators \eqref{eq--app.diffcomplex} where now $E^n$ is an $H$-vector bundle. We can define an $H$-action on $s\in\Gamma(E^n)$ by $(h\cdot s)(x)=h\cdot(s(h^{-1}\cdot x))$. If the cochain maps $d^n$ commute with this $H$-action, we say that \eqref{eq--app.diffcomplex} is $H$-invariant. Then the cochain maps of the symbol \eqref{eq--app.symcomplex} are $H$-maps, i.e. the symbol also respects the $H$-action. Thus, we can define the set $C^n_H(X)$ of all compactly supported complexes of length $n$ of $H$-vector bundles over $X$ respecting the $H$-action, up to $H$-homotopy. 
        
        Finally, there is an analogue of \autoref{thm--atiyah}, saying that
        \begin{equation*}
            K_H(X)\simeq C^n_H(X)/C^n_{\emptyset,H}(X)
        \end{equation*} 
        For a proof, see \cite{Segal:1968} Proposition 3.1, p. 139. Hence, in particular, the symbol of an $H$-invariant elliptic complex is (a representative of) an element in $K_H(X)$.

\bibliographystyle{JHEP.bst}
\bibliography{main}

\end{document}